\colorlet{myPurple}{blue!40!red}
\colorlet{myPurplee}{blue!10!red}
\colorlet{myCyan}{cyan!60!gray}
\colorlet{myRed}{red!66!black}
\pgfplotsset{compat=1.14}
\newcommand{\sket}[1]{{\ensuremath{\lvert#1\rangle}}}
\newcommand{\lket}[1]{{\ensuremath{\left\lvert#1\right\rangle}}}
\newcommand{\ket}[1]{\if@display\lket{#1}\else\sket{#1}\fi}
\newcommand{\sbra}[1]{{\ensuremath{\langle#1\rvert}}}
\newcommand{\lbra}[1]{{\ensuremath{\left\langle#1\right\rvert}}}
\newcommand{\bra}[1]{\if@display\lbra{#1}\else\sbra{#1}\fi}
\newcommand{\sbraket}[2]{{\ensuremath{\langle#1\rvert#2\rangle}}}
\newcommand{\lbraket}[2]{{\ensuremath{\left\langle#1\!\left\rvert\vphantom{#1}#2\right.\!\right\rangle}}}
\newcommand{\braket}[2]{\if@display\lbraket{#1}{#2}\else\sbraket{#1}{#2}\fi}
\newcommand{\sketbra}[2]{{\ensuremath{\lvert #1\rangle\!\langle #2\rvert}}}
\newcommand{\lketbra}[2]{{\ensuremath{\left\lvert #1\right\rangle\!\!\left\langle #2\right\rvert}}}
\newcommand{\ketbra}[2]{\if@display\lketbra{#1}{#2}\else\sketbra{#1}{#2}\fi}
\newcommand{\proj}[1]{\ketbra{#1}{#1}}
\newcommand{\junk}{\ket{\xi}}
\newcommand{\tr}{\textrm{Tr}}
\newcommand{\idd}{\mathds{1}}
\newcommand{\M}{\mathsf{M}}
\newcommand{\K}{\mathsf{K}}
\newcommand{\X}{\mathsf{X}}
\newcommand{\Z}{\mathsf{Z}}
\newcommand{\E}{\mathsf{E}}
\newcommand{\D}{\mathsf{D}}
\newcommand{\Pp}{\mathsf{P}}
\newcommand{\Ss}{\mathsf{S}}
\newcommand{\R}{\mathsf{R}}
\theoremstyle{plain}
\newtheorem{thm}{Theorem}
\newtheorem{lem}{Lemma}
\DeclareMathAlphabet{\mathcal}{OMS}{cmsy}{m}{n}
\begin{document}

\title{Self-testing nonlocality without entanglement}

\author{Ivan \v{S}upi\'{c}}
\affiliation{CNRS, LIP6, Sorbonne Universit\'{e}, 4 place Jussieu, 75005 Paris, France}
\email{ivan.supic@lip6.fr}
\author{Nicolas Brunner}
\affiliation{Département de Physique Appliquée, Université de Genève, 1211 Genève, Switzerland}

\date{\today}

\begin{abstract}
Quantum theory allows for nonlocality without entanglement. Notably, there exist bipartite quantum measurements consisting of only product eigenstates, yet they cannot be implemented via local quantum operations and classical communication. In the present work, we show that a measurement exhibiting nonlocality without entanglement can be certified in a device-independent manner. Specifically, we consider a simple quantum network and construct a self-testing procedure. This result also demonstrates that genuine network quantum nonlocality can be obtained using only non-entangled measurements. From a more general perspective, our work establishes a connection between the effect of nonlocality without entanglement and the area of Bell nonlocality.
\end{abstract}

\maketitle

\section{Introduction}

The (un)ability to distinguish certain quantum states via measurements is a central aspect of quantum theory, and central to applications such as quantum key distribution~\cite{BB84,Ekert} and data-hiding~\cite{QDH}. This question is of particular interest when considering composite systems. One may consider for instance two remote observers, Alice and Bob, sharing a state chosen from a certain set. Alice and Bob should try to identify which state they received. In this context, a natural limitation is that Alice and Bob must restrict to local measurements (or operations) assisted by classical communication (LOCC). Surprisingly in this case, there exist sets of product states forming a basis, which nevertheless cannot be perfectly distinguished by any LOCC measurement. While Ref.~\cite{PeresWootters} provided preliminary results in this direction, the first examples were constructed by Bennett et al.~\cite{Bennett1999}, coining the effect ``nonlocality without entanglement''. This shows that separable quantum measurements (where all eigenstates are non-entangled) are strictly stronger than LOCC measurements. These ideas have been generalized in many different directions see e.g.~\cite{UPB,Walgate2002,Niset,Cohen,FengShi,Childs,Croke}, with connections to the notion of unextendible product basis and bound entanglement~\cite{UPB,DiVincenzo}. In recent years, renewed interest has been devoted to these ideas, with the discovery of stronger forms of this effect in particular in multipartite systems, see e.g.~\cite{Halder2019,Banik2,Ha}.

In this work, we discuss the question of certifying the effect of ``nonlocality without entanglement'' (NLWE) in a black-box setting. Specifically, we consider the quantum measurement featuring NLWE introduced in~\cite{Bennett1999} (for a two-qutrit system), and show that it can be certified in a device-independent manner. For this we consider the simple quantum network of entanglement swapping~\cite{Zukowski1993} (also known as ``bilocality'' network~\cite{Branciard_2010}), where the middle party performs the NLWE measurement (see Fig.~\ref{fig:domino}). Based on the assumption that the two quantum sources present in the network are independent, the standard assumption in network nonlocality~\cite{Branciard_2010,Fritz_2012}, we can show that NLWE can be certified using concepts and tools from self-testing~\cite{MayersYao,SupicBowles}, a framework for the device-independent certification of quantum resources. In fact, the full quantum setup can be self-tested, including also the shared entangled states and the local measurements of the side parties. Finally, we discuss how our self-testing scheme can be generalized. 

Our result shows that a strong form of nonlocal quantum correlations in networks, known as genuine network quantum nonlocality~\cite{Supic2022}, can be obtained without the need for entangled measurements (as traditionally used in network nonlocality). From a more general point of view, our work also finally connects the effect of ``nonlocality without entanglement'' with the area of Bell nonlocality~\cite{Bell,review}.

\section{Problem}

\begin{figure}
\centering
    \begin{subfigure}[b]{1\columnwidth}
         \centering
	\includegraphics[width = 1\columnwidth]{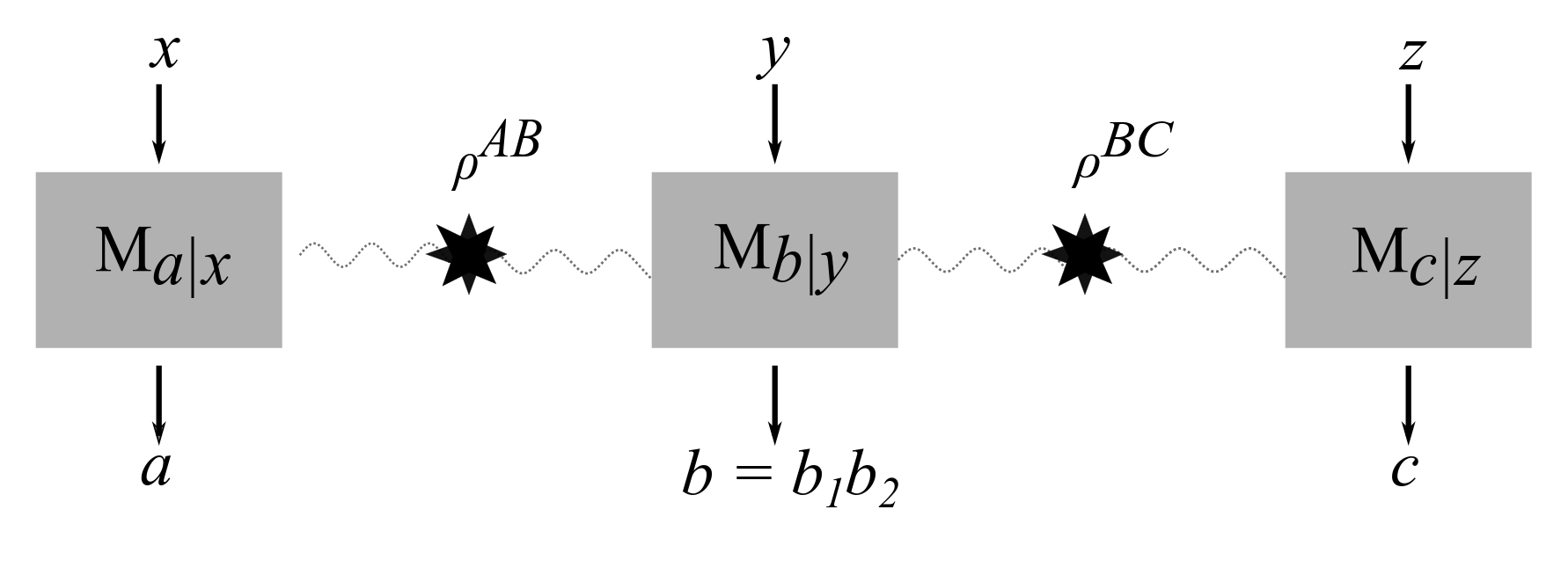}
	\caption{}
	\label{fig:scenario}
	\end{subfigure}
	\hfill
	\begin{subfigure}[b]{1\columnwidth}
         \centering
	\includegraphics[width = 0.6\columnwidth]{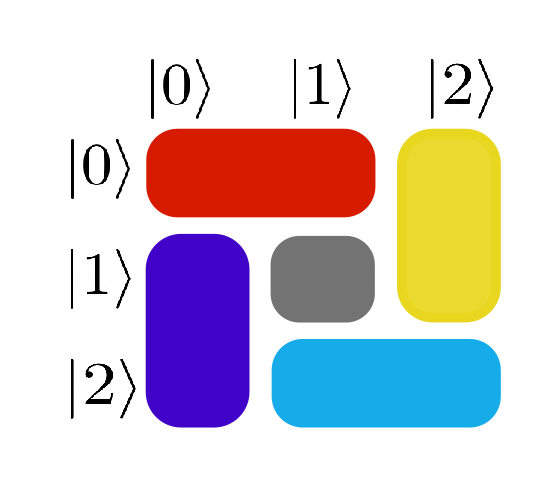}
	\caption{}
	\label{fig:domino}
	\end{subfigure}
	\caption{We consider the entanglement swapping (or bilocality) network represented in (a), where the middle party Bob performs measurements on two incoming subsystems. Our goal is to self-test one of these measurements, denoted $\M'_{\lozenge}$, which features nonlocality without entanglement. That is, the measurement consists of only product eigenstates---as shown by the domino tiles in (b)---yet it cannot be implemented by LOCC. The red tile corresponds to the two projectors $\M'_{0,1|\lozenge}$ and $\M'_{0,2|\lozenge}$ from Eq. \eqref{dominomeas}, and so on.
	}
\end{figure}

We consider the entanglement swapping experiment~\cite{Zukowski1993} consisting of two sources and three parties (nodes), as shown in Fig.~\ref{fig:scenario}. We call the central party Bob, and two lateral ones Alice and Charlie. The center of out interest are the correlations obtained in this setting, the so-called bilocality network~\cite{Branciard_2010,branciard2012bilocal}, where the two sources are assumed to be independent from each other. Note that the characterisation of local and quantum correlations in networks featuring independent sources has attracted growing attention in recent years, see e.g~ \cite{Branciard_2010,Fritz_2012,gisin2020constraints,Renou_2019, Pozas_Kerstjens_2019} and~\cite{ArminReview} for a review.

In this work, our main goal is to certify in a device-independent manner that Bob performs a quantum measurement featuring NLWE. That is, assuming only the independence of the two sources, we will show that, from observed data alone, the presence of such a measurement can be demonstrated, up to irrelevant local transformations. We use the tools and concepts of self-testing~\cite{MayersYao,SupicBowles}, in particular the results of Ref.~\cite{Yang}. While self-tests have been developed for specific joint entangled measurements (such as the well-known Bell-state measurement)~\cite{renou2018self,Bancal_BSM}, we show here that a similar construction is possible for relevant measurements with only separable eigenstates.

Self-testing is a procedure which establishes a form of equivalence between two experiments. First, we have the physical experiment, which corresponds to the actual experiment performed in the laboratory, featuring a priori unknown states and measurements to be certified. The second experiment is termed the reference (or idealized) experiment, or the target for the certification. We say that the physical experiment simulates the reference experiment if it reproduces exactly its statistics. If such a simulation is enough to infer the existence of a local isometry mapping the physical experiment to the reference experiment, we say that the reference experiment is self-tested. In other words, this shows that the physical experiment must be equivalent (up to irrelevant local transformations) to the reference experiment.

We start by describing the reference experiment. The left source, connecting Alice and Bob, prepares a pair of maximally entangled qutrits. The right source, connecting Bob and Charlie, prepares the same state. Hence, the reference states are given by
\begin{align}
    \ket{\psi'}^{A'B_1'} = \ket{\psi'}^{B_2'C'} = \ket{\phi_+} 
\end{align}
where $\ket{\phi_+} = (\ket{00}+\ket{11}+\ket{22})/\sqrt{3}$. Note that we use superscripts to identify the various systems, and that we distinguish the two subsystems of Bob by $B_1'$ and $B_2'$. The reference measurements for Alice are given by three ternary measurements:

\begin{align*}
    \M'_{0} &= \left\{\M'_{0|0} = \proj{0}, \M'_{1|0} = \proj{1}, \M'_{2|0} = \proj{2} \right\},\\
    \M'_{1} &= \left\{\M'_{0|1} = \proj{+}_{0,1}, \M'_{1|1} = \proj{-}_{0,1}, \M'_{2|1} = \proj{2} \right\},\\
    \M'_{2} &= \left\{\M'_{0|2} = \proj{0}, \M'_{1|2} = \proj{+}_{1,2}, \M'_{2|2} = \proj{-}_{1,2} \right\},
\end{align*}
where $\proj{\pm}_{j,k} \equiv (\ket{j}\pm \ket{k})(\bra{j}\pm \bra{k})/{2}$.
Here $\M'_{x}$ denotes the projective measurement for input $x=0,1,2$. Each measurement produces a ternary output $a$, with POVM elements denoted $\M'_{a|x}$. The reference measurements for Charlie are the same as for Alice, and we will denote them by $\M'_{z}$ with elements $\M'_{c|z}$. 

The reference measurements of Bob are of two types. First, we have four measurements with a clear subsystem separation. On the subsystem $B_1$, these are used to self-test the state shared with Alice, as well as Alice's measurements. On the subsystem $B_2$, these are used to self-test the state shared with Charlie, and Charlie's measurements. More precisely, these measurements take the form $\M'_{b_1,b_2|y} = \M'_{b_1|y}\otimes \M'_{b_2|y}$ for $b_1,b_2=0,1,2$ and $y = 0,1,2,3$, where the POVM elements $\M'_{b_1|y}$ and $\M'_{b_2|y}$ correspond to the eigenvectors of the following four operators:
\begin{align*}
    \frac{1}{\sqrt{2}}\begin{bmatrix}
    1 & \pm1 & 0\\ \pm1 & \mp1 & 0\\ 0& 0 & \sqrt{2}
    \end{bmatrix}, \qquad \frac{1}{\sqrt{2}}\begin{bmatrix}
    \sqrt{2} & 0 & 0\\ 0 & 1 & \pm 1\\ 0& \pm 1 & \mp 1
    \end{bmatrix},
\end{align*}
The second type of measurement for Bob is our main object of interest. This corresponds to a single extra measurement, which corresponds to the input $y=\lozenge$. This measurement exhibits the property of nonlocality without entanglement, and is denoted  ${\mathcal{M}'_{\lozenge}} = \{\M'_{b_1,b_2|\lozenge}\}_{b_1,b_2=0}^2$.
It consists of nine eigenstates, which are all product with respect to the partition $B_1$ vs $B_2$, given by
\begin{widetext}
\begin{align} \nonumber
    \M'_{0,0|\lozenge} = \proj{1}\otimes\proj{1}, \qquad \M'_{0,1|\lozenge} = \proj{0}\otimes\proj{+}_{0,1}, \qquad \M'_{0,2|\lozenge} = \proj{0}\otimes\proj{-}_{0,1},\\ \label{dominomeas}
    \M'_{1,0|\lozenge} = \proj{2}\otimes\proj{+}_{1,2}, \qquad \M'_{1,1|\lozenge} = \proj{2}\otimes\proj{-}_{1,2}, \qquad \M'_{1,2|\lozenge} = \proj{+}_{1,2}\otimes\proj{0},\\ \nonumber
    \M'_{2,0|\lozenge} = \proj{-}_{1,2}\otimes\proj{0}, \qquad \M'_{2,1|\lozenge} = \proj{+}_{0,1}\otimes\proj{2}, \qquad \M'_{2,2|\lozenge} = \proj{-}_{0,1}\otimes\proj{2}.
\end{align}
\end{widetext}
The key property of this set of states, is that they cannot be perfectly distinguished via local measurements assisted with classical communication (so-called LOCC measurements) \cite{Bennett1999}. Therefore, the measurement $\mathcal{M}'_{\lozenge}$ cannot be implemented via LOCC, illustrating the fact that separable measurements are strictly stronger than LOCC measurements.

Combining the above states and measurements, we obtain the statistics of the reference experiment, given by the joint conditional probability distribution
\begin{multline} \label{ReferenceCorrelations}
    p'(a,b_1,b_2,c|x,y,z) =  \\ = \tr\left[\left({\M'}_{a|x}^{A'}\otimes {\M'}_{b_1,b_2|y}^{B_1'B_2'}\otimes {\M'}_{a|x}^{C'} \right){\psi}_+^{A'B'_1} \otimes {\psi}_+^{B'_2C'}\right].
\end{multline}
where $x,z=0,1,2$, $y=0,1,2,3,\lozenge$ and $a,b_1,b_2,c=0,1,2$. Note that when computing the above equation, one should be careful about the order of the subsystems.

\section{Main result}

We now present our main result, namely that the reference experiment is a self-test. Consider a physical experiment with a priori unknown states $\ket{\psi}^{AB_1}$ and $\ket{\psi}^{B_2C}$ and measurements $\{\M_{a|x}\}$, $\{\M_{b_1,b_2|y}\}$ and $\{\M_{c|z}\}$, resulting in observed correlations
\begin{multline} \label{PhysicalCorrelations}
    p(a,b_1,b_2,c|x,y,z) =  \\ = \tr\left[\left({\M}_{a|x}^{A}\otimes {\M}_{b_1,b_2|y}^{B_1B_2}\otimes {\M}_{a|x}^{C} \right){\psi}^{AB_1} \otimes {\psi}^{B_2C}\right].
\end{multline}
Below we show that if these statistics correspond to those of the reference experiment, as given in Eq.~\eqref{ReferenceCorrelations}, then all states and measurements of the physical experiment are equivalent (up to irrelevant local transformation) to the reference states and measurements. In particular, this implies that the measurement $y=\lozenge$ for Bob must feature NLWE. More formally, we have the following theorem.

\begin{thm}\label{theorem} Consider a physical experiment such that its statistics, as given in Eq. \eqref{PhysicalCorrelations}, match exactly the statistics of the reference experiment given in Eq. \eqref{ReferenceCorrelations}. Then there exists a local isometry $\tilde{\Phi}$ mapping
\begin{itemize}
    \item Bob's measurement $\M_\lozenge$ to the NLWE measurement $\mathcal{M}'_{\lozenge}$:
\begin{multline}\label{stdomino}
\tilde{\Phi}\left(\M_{b|\lozenge}^{B_1B_2}\ket{\psi}^{AB_1}\otimes\ket{00}^{A'B_1'}\otimes\ket{\psi}^{B_2C}\otimes\ket{00}^{B_2'C'}\right) = \\
     = \ket{\tilde{\xi}}^{AB_1B_2C}\otimes{\M'}_{b|\lozenge}^{B_1'B_2'}\ket{\phi_+}^{A'B_1'}\otimes\ket{\phi_+}^{B_2'C'},
\end{multline}
where $\ket{\tilde{\xi}}^{AB_1B_2C} = \ket{\xi_1}^{AB_1}\otimes\ket{\xi_2}^{B_2C}$ is a valid quantum state.
    \item  the states $\ket{\psi}^{AB_1}$ and $\ket{\psi}^{B_2C}$ to the reference states (i.e. pairs of maximally entangled qutrits), and Alice's and Charlie's measurements to the corresponding reference measurements $\M'_j$:
    \begin{multline}\label{iso1}
    \tilde{\Phi}\left(\M_{a|x}\ket{\psi}^{A_1B_1}\otimes\M_{c|z}\ket{\psi}^{B_2C}\otimes\ket{0000}^{A'B_1'B_2'C'}\right) = \\ = \ket{\tilde{\xi}}^{AB_1B_2C}\otimes\M'_{a|x}\ket{\phi_+}^{A'B_1'}\otimes\M'_{c|z}\ket{\phi_+}^{B_2'C'}.
\end{multline}
\end{itemize}
\end{thm}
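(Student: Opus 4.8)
The plan is to exploit the asymmetry between Bob's two kinds of settings. The four settings $y\in\{0,1,2,3\}$ are products across $B_1$ and $B_2$, so they decouple the network into two independent bipartite experiments that I will self-test separately; only afterwards do I attack the genuinely joint measurement $y=\lozenge$. Throughout, the source-independence assumption is what guarantees that the global state factors as $\ket{\psi}^{AB_1}\otimes\ket{\psi}^{B_2C}$ and that the two halves of the network may be analysed in isolation.

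\emph{States and side measurements (Eq.~\eqref{iso1}).} Fixing a product setting $y\in\{0,1,2,3\}$, I would marginalise the observed distribution over Charlie's outcome $c$ and over Bob's label $b_2$. Because the $B_2C$ source is independent and $\sum_c \M_{c|z}=\idd$, this yields a well-defined effective POVM on $B_1$ and hence a genuine bipartite behaviour $p(a,b_1|x,y)$ between Alice and $B_1$; by hypothesis it coincides with the reference behaviour of a maximally entangled qutrit pair measured with $\M'_x$ and the $B_1$-part of Bob's factorised settings. I would then invoke the bipartite self-test of Ref.~\cite{Yang} to obtain a local isometry $\Phi_A\otimes\Phi_{B_1}$ sending $\ket{\psi}^{AB_1}$ to $\ket{\xi_1}^{AB_1}\otimes\ket{\phi_+}^{A'B_1'}$ and $\M_{a|x}$ to $\M'_{a|x}$. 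The symmetric argument on the $B_2C$ half gives $\Phi_{B_2}\otimes\Phi_C$, and tensoring the four local isometries produces $\tilde{\Phi}$ together with the factorised junk $\ket{\tilde{\xi}}=\ket{\xi_1}\otimes\ket{\xi_2}$, which is exactly Eq.~\eqref{iso1}.

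\emph{The NLWE measurement (Eq.~\eqref{stdomino}).} Here I would use steering. Since the two states are now certified to be maximally entangled qutrits and Alice's and Charlie's measurements are certified to be the reference ones, a suitable setting and post-selected outcome of Alice prepares on $B_1$ (after the isometry) any one of the real vectors $\ket{0},\ket{1},\ket{2},\ket{\pm}_{0,1},\ket{\pm}_{1,2}$, and likewise Charlie prepares such a vector on $B_2$. Their nine product combinations are precisely the domino eigenstates $\ket{u_{b_1,b_2}}$ of $\mathcal{M}'_\lozenge$. The reference statistics are deterministic on these configurations---preparing $\ket{u_{b_1,b_2}}$ forces Bob's outcome to be $(b_1,b_2)$---and by the matching hypothesis the same holds physically, so $\bra{u_{b_1,b_2}}\M_{b_1,b_2|\lozenge}\ket{u_{b_1,b_2}}=1$ for each outcome. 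Since $0\preceq\M_{b_1,b_2|\lozenge}\preceq\idd$, this pins $\ket{u_{b_1,b_2}}$ as a $+1$ eigenvector; as $\{\ket{u_{b_1,b_2}}\}$ is an orthonormal basis of the two-qutrit space and $\sum_{b_1,b_2}\M_{b_1,b_2|\lozenge}=\idd$, a short positivity argument (each $\M_{b_1,b_2|\lozenge}\preceq\idd-\M_{b_1',b_2'|\lozenge}$ must annihilate every other basis vector) forces $\M_{b_1,b_2|\lozenge}=\proj{u_{b_1,b_2}}=\M'_{b_1,b_2|\lozenge}$. Because each of Bob's reduced states is full rank, the measurement is determined on the whole space, and Eq.~\eqref{stdomino} follows by transporting this identity through $\tilde{\Phi}$.

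The delicate point---and the main obstacle---is the consistency of a \emph{single} isometry across the two kinds of settings: $\tilde{\Phi}$ is built only from the product settings $y\in\{0,1,2,3\}$, yet Eq.~\eqref{stdomino} requires it to bring the a priori unconstrained joint operator $\M_{b|\lozenge}$ to the reference form. What makes this work is that steering fixes the action of $\M_{b|\lozenge}$ on the support of the certified maximally entangled states, and this support is the entire space; I would have to check carefully that the steered preparations indeed live in the reference registers $B_1'B_2'$ under $\tilde{\Phi}$ and that no weight leaks into the junk $\ket{\tilde{\xi}}$. A secondary subtlety worth flagging is that the steerable states span only the real subspace of each qutrit and are therefore not tomographically complete; the eigenbasis argument above is precisely what lets me bypass full process tomography of $\mathcal{M}'_\lozenge$ and still conclude the measurement exactly.
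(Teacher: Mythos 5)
Your overall architecture matches the paper's: use the product settings $y\in\{0,1,2,3\}$ to self-test the two halves of the network, then pin down $\M_{\lozenge}$ via determinism on matched input/output pairs (the paper's Table~\ref{table}). Your steering step is essentially the paper's argument in different clothes: the paper notes that $p(a,b,c|x,\lozenge,z)=1/9$ is the inner product of two vectors of norm $1/3$, so saturation of Cauchy--Schwarz forces the vector identity $\M_{b|\lozenge}\ket{\psi}^{AB_1}\otimes\ket{\psi}^{B_2C}=\M_{a|x}\otimes\M_{c|z}\,\ket{\psi}^{AB_1}\otimes\ket{\psi}^{B_2C}$, while your determinism-plus-positivity argument ($0\preceq\M_{b|\lozenge}\preceq\idd$ pins $+1$-eigenvectors, completeness kills the rest) yields the same identity. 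Crucially, that identity holds in the physical Hilbert space \emph{before} any isometry is applied; Eq.~\eqref{stdomino} then follows by applying $\tilde{\Phi}$ and Eq.~\eqref{iso1} to the right-hand side. This observation resolves the ``leakage into the junk'' worry you flag: you never need to transport $\M_{b|\lozenge}$ as an operator, only its action on the state (and indeed the theorem claims no more than that; your stronger claim that the measurement is ``determined on the whole space'' is unnecessary and unwarranted, since the physical space can be much larger than the span of the steered vectors).

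The genuine gap is in your first step. You invoke Ref.~\cite{Yang} as a black box that certifies both the state \emph{and} the measurements, but---as the paper itself stresses---\cite{Yang} self-tests only the maximally entangled qutrit pair; extending it to certify Alice's and Charlie's measurements is precisely the content of Lemma~\ref{lem2} and the bulk of Appendix~\ref{suppmat1}, so as written your proof assumes the paper's main technical lemma. Moreover, your reduction to two genuinely bipartite experiments hides the two difficulties that the paper spends its appendices resolving. First, Bob's coarse-grained operators $\M_{b_1|y}=\sum_{b_2}\M_{b_1,b_2|y}$ act a priori on $\mathcal{H}^{B_1}\otimes\mathcal{H}^{B_2}$; your effective POVM $\tr_{B_2}\left[\M_{b_1|y}\left(\idd^{B_1}\otimes\rho^{B_2}\right)\right]$ does reproduce $p(a,b_1|x,y)$, but it is no longer projective, whereas the \cite{Yang}-style construction needs projectivity (differences such as $\M^{B}_{0|0}-\M^{B}_{1|0}$ must square to projectors to build the unitaries entering the isometry); you would need a Naimark dilation and its bookkeeping, while the paper instead keeps the actual projective operators and proves from the product structure of the sources that they act nontrivially only on $\mathcal{H}^{B_1}$. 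Second, merging the two half-isometries into a single $\tilde{\Phi}$ with factorized junk $\ket{\xi_1}^{AB_1}\otimes\ket{\xi_2}^{B_2C}$ is not automatic: the paper's Appendix~\ref{suppmat2} does this sequentially, using the first isometry to define operators $\K_{b_1,b_2|y}$ and verifying that they, together with Charlie's measurements and the post-isometry state, satisfy the hypotheses of Lemma~\ref{lem2} for the second half. Your parallel ``tensor the four isometries'' shortcut could plausibly be made rigorous along the effective-POVM route, but as stated it simply asserts the locality and consistency properties whose proof is the substance of the theorem.
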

\begin{proof}
The theorem consists of two self-testing results, and we start by proving the second one. The idea is to first prove that in order to reproduce some of the marginals of the reference correlations, specifically $\{p(a,b_1|x,y)\}$, implies the equivalence between $\ket{\psi}^{AB_1}$ and $\ket{\phi_+}$. The formal statement is given in the following lemma.

\begin{table}
    \begin{tabular}{|c|c|}
    \hline
        $\quad$ Input set $(x,z)$ $\quad$ & $\quad$ Matching output set $(a,b_1,b_2,c)$ $\quad$ \\
        \hline
        \hline
        $(0,0)$ & $(1,0,0,1)$ \\
        \hline
        $(0,1)$ & $(0,0,1,0)$\\
        & $(0,0,2,1)$\\
        \hline
         $(0,2)$ & $(2,1,0,1)$\\
        & $(2,1,1,2)$\\
        \hline
          $(2,0)$ & $(1,1,2,0)$\\
        & $(2,2,0,0)$\\
        \hline
          $(1,0)$ & $(0,2,1,2)$\\
        & $(1,2,2,2)$\\
        \hline
    \end{tabular}
    \caption{For Bob's input $y=\lozenge$, certain sets of inputs $(x,z)$ for Alice and Charlie imply  specific outputs patterns.}\label{table}
\end{table}

\begin{lem}\label{lem2}
Let the state $\ket{\psi}^{AB_1}\otimes\ket{\psi}^{B_2C}$ and measurements $\{\M_{a|x}\}$ and $\{\M_{b_1|y}\}$ such that $\M_{b_1|y} = \sum_{b_2}\M_{b_1,b_2|y}$ produce correlations $p(a,b_1|x,y)$. If those correlations are such that
\begin{multline}
p(a,b_1|x,y) = \emph{\tr}\left[\left({\M'}_{a|x}^{A'}\otimes {\M'}_{b_1,|y}^{B'_1}\right)\left({\phi}_+^{A'B'_1}\right)\right],
\end{multline}
for every $a,b_1,x,y$ then there exists isometry $\Phi$ such that
\begin{multline}\label{iso}
    \Phi\left(\M_{a|x}\ket{\psi}^{A_1B_1}\otimes\ket{\psi}^{B_2C}\otimes\ket{00}^{A'B'}\right) = \\ = \junk^{ABC}\otimes\left(\M'_{a|x}\ket{\phi_+}^{A'B_1'}\right),
\end{multline}
where $\junk^{ABC}$ is a valid quantum state.
\end{lem}
The proof can be found in Appendix~\ref{suppmat1} and it is directly inspired by the self-testing of maximally entangled pair of qutrits presented in~\cite{Yang}. Combined with methods for self-testing joint measurements introduced in~\cite{Supic2021} allows us to get the self-testing result for both states, as well as all measurements performed by Alice and Charlie, as stated in eq.~\eqref{iso1}. 
The proof of this equation is given in Appendix~\ref{suppmat2}.

Based on these self-testing results, we move on to the second part of the proof, for certifying the additional measurement of Bob, namely $\M_{\lozenge}$ from Eq.~\eqref{stdomino}. First, note that simulation of the reference correlations involves
\begin{align}\nonumber
    p(b|\lozenge) &= \tr\left[\M_{b|\lozenge}^{B}\left(\psi^{AB_1}\otimes\psi^{B_2C}\right)\right] = \frac{1}{9}  \qquad \forall b.
\end{align}
This implies that the norm of vectors $\M_{b|\lozenge}\left(\ket{\psi}^{AB_1}\otimes\ket{\psi}^{B_2C}\right)$ for all outcomes $b$ equals $1/3$. Similarly, from the simulation of the reference correlations we have:
\begin{equation}
    p(a,c|x,z) = \frac{1}{9} \qquad \forall a,c,x,z,
\end{equation}
implying that the norm of vectors $\M_{a|x}\ket{\psi}^{A_1B_1}\otimes \M_{c|z}\ket{\psi}^{B_2C}$ for all $a,c,x,z$ equals $1/3$.

Let us define for certain input sets $(x,\lozenge,z)$ their matching outputs sets $(a,b,c)$, as given in Table~\ref{table}. The simulation of the reference correlations imposes that for every set of inputs from Table~\ref{table}, the matching set of outputs happens with probability $1/9$. Let us now concentrate on the set of inputs $(0,\lozenge,0)$ and its matching set of outputs $(1,0,0,1)$, \emph{i.e.} eq.~$p(1,0,0,1|0,\lozenge,0) = 1/9$. Given eq. \eqref{iso} we obtain the following set of relations:
\begin{widetext}
\begin{align*}
    p(1,0,0,1|0,\lozenge,0) &=  \bra{\psi}^{AB_1}\otimes\bra{\psi}^{B_2C}\M^A_{1|0}\otimes \M^{B_1B_2}_{0,0|\lozenge}\otimes\M^C_{1|0}\ket{\psi}^{AB_1}\otimes\ket{\psi}^{B_2C}\\
       &=  \left(\bra{\psi}^{AB_1}\otimes\bra{00}^{A'B_1'}\otimes\bra{\psi}^{B_2C}\otimes\bra{00}^{B'_2C'}\M^A_{1|0}\otimes\M^C_{1|0}\otimes\idd^{A'B_1B_2B'_1B'_2C'}\right)\Phi^\dagger\times \\ &\qquad\qquad\times \Phi\left(\M_{0,0|\lozenge}^{B_1B_2}\otimes\idd^{AA'B_1'B_2'CC'}\ket{\psi}^{AB_1}\otimes\ket{00}^{A'B_1'}\otimes\ket{\psi}^{B_2C}\otimes\ket{00}^{B_2'C'}\right)\\
       &= \bra{\xi_1}^{AB_1}\otimes\bra{\xi_2}^{B_2C}\otimes\bra{\phi_+}^{A'B'_1}{\M'}_{1|0}^{A'}\otimes\bra{\phi_+}^{B'_2C'}{\M'}^{C'}_{1|0}\times\\ &\qquad\qquad\times\Phi\left(\M_{0,0|\lozenge}^{B_1B_2}\otimes\idd^{AA'B_1'B_2'CC'}\ket{\psi}^{AB_1}\otimes\ket{00}^{A'B_1'}\otimes\ket{\psi}^{B_2C}\otimes\ket{00}^{B_2'C'}\right) = \frac{1}{9}
\end{align*}
\end{widetext}
The second relation is the consequence of the fact that isometries do not change the scalar product. Given that both vectors in the scalar product in the third equality have norm $1/3$ the saturation of Cauchy-Schwarz inequality implies
\begin{multline}
    \Phi\left(\M_{0,0|\lozenge}^{B_1B_2}\ket{\psi}^{AB_1}\otimes\ket{00}^{A'B_1'}\otimes\ket{\psi}^{B_2C}\otimes\ket{00}^{B_2'C'}\right) = \\
     = \frac{1}{9}\ket{\xi_1}^{AB_1}\otimes\ket{\xi_2}^{B_2C}\otimes\ket{11}^{A'B_1'}\otimes\ket{11}^{B_2'C'}
\end{multline}
With similar argumentation we obtain the following set of relations for all $b = b_1,b_2$:
\begin{align*}
    &\Phi\left(\M_{b|\lozenge}^{B_1B_2}\ket{\psi}^{AB_1}\otimes\ket{00}^{A'B_1'}\otimes\ket{\psi}^{B_2C}\otimes\ket{00}^{B_2'C'}\right) = \\
     & = \frac{1}{9}\ket{\xi_2}^{AB_1}\otimes\ket{\xi_2}^{B_2C}\otimes\ket{\psi'_{b}}^{A'}\otimes\ket{\psi'_{b}}^{B_1'}\otimes\ket{\phi'_{b}}^{B_2'}\otimes\ket{\phi'_{b}}^{C'},
     \end{align*}
where states $\ket{\psi'_{b}}$ and  $\ket{\phi'_{b}}$ are such that $\M'_{b|\lozenge} = \proj{\psi'_{b}}\otimes\proj{\phi'_{b}}$.
All these equations together with eq.~\eqref{finalst} imply the self-testing result we needed as given in eq.~\eqref{stdomino}.
\end{proof}

\section{General construction}

The above construction can be generalized to other measurements featuring NLWE, for higher dimensions (still in the bipartite case) and to the multipartite case. The general idea is that the above procedure allows one to basically self-test any measurement with product rank-one eigenstates. When the later involve only real parameters, the construction is rather straightforward, while the general case with complex coefficients is more challenging, as usual in self-testing~\cite{Supic2021}. 

For bipartite NLWE measurements, the bilocality network can be readily used. 
If the measurement $\M'_{b|\lozenge}$ acts now on $\mathbb{C}^{d_A}\otimes\mathbb{C}^{d_B}$, the local dimensions of two maximally entangled states distributed in the network must be adapted (to $d_A$ and $d_B$ respectively). These states, and the local measurements of Alice and Charlie, can be self-tested using any of the available methods (see for example~\cite{Yang,Supic2021}). In turn, Alice and Charlie can remotely prepare for Bob (via their certified local measurements acting on half of the shared maximally entangled pairs) the pair of input states in order to match any of the eigenstates of the measurement $\M'_{b|\lozenge}$.

Moving to the multipartite case will involve a star network, where the central node will perform the measurement with NLWE. For an N-party measurement with qudits, we consider a star network with $N$ branches. On each branch a maximally entangled state of two qudits must first be self-tested, as well as local measurements of the lateral nodes. Second the central NLWE is self-tested as above.

\section{Discussion}

We discussed the device-independent certification of the effect of ``nonlocality without entanglement. Specifically, we showed that a quantum measurement featuring only separable eigenstates, but which cannot be implemented via an LOCC procedure, can be certified in a quantum network with independent sources, based only on observed statistics. 

A point worth noting is that our self-test construction has interesting consequences from the perspective of network nonlocality. In particular, this example shows that genuine quantum network nonlocality \cite{Supic2022}, a form of quantum nonlocality that can only arise in networks, is in fact possible without involving any entangled measurement. 

Let us point out that previous works have discussed the certification of non-classical quantum measurements in a partially device-independent setting, considering prepared quantum states of limited dimension \cite{Vertesi2011,Bennet2014}, also with an experimental demonstration \cite{Bennet2014}. A key difference with our work (besides the stronger assumptions), is that these previous results could only certify that a measurement is not achievable via LOCC, but could not certify the property of NLWE, as we do here.


From a more general perspective, our work establishes a connections between two forms of nonlocality in quantum theory, namely Bell nonlocality and the effect of nonlocality without entanglement.

\begin{acknowledgments}
The authors acknowledge financial support from the Starting ERC grant QUSCO and the Swiss National Science Foundation (project $2000021\_192244/1$ and NCCR SwissMAP).
\end{acknowledgments}

\onecolumngrid
\appendix

\section{Proof of Lemma~\ref{lem2}}\label{suppmat1}

In this section we prove Lemma~\ref{lem2}. The proof is largely inspired by the proof offered in~\cite{Yang}. The most important part of our contribution is self-testing of measurements as well, and not just the state as in~\cite{Yang}. It will be convenient to define the marginal physical measurements operators for Bob $\M^B_{b_1|y} = \sum_{b_2}\M^B_{b_1,b_2|y}$. Not that this physical coarse-grained measurement can, in principle, act on both Hilbert spaces $\mathcal{H}^{B_1}$ and $\mathcal{H}^{B_2}$. Let us further introduce the following notation:
\begin{align}
 \Z_{0,1}^A = \M^A_{0|0} - \M^A_{1|0}, \qquad \Z_{1,2}^A = \M^A_{1|0} - \M^A_{2|0}, &\qquad \X_{0,1}^A = \M^A_{0|1} - \M^A_{1|1}, \qquad \X_{1,2}^A = \M^A_{1|2} - \M^A_{2|2},\\
 \D_{0,1}^B = \M^B_{0|0} - \M^B_{1|0}, \qquad \D_{1,2}^B = \M^B_{1|2} - \M^B_{2|2}, &\qquad \E_{0,1}^B = \M^B_{0|1} - \M^B_{1|1}, \qquad \E_{1,2}^B = \M^B_{1|3} - \M^B_{2|3},\\ \label{hatoperators}
 {\hat{\Z}}^B_{0,1} = \frac{\D_{0,1}^B + \E_{0,1}^B}{\sqrt{2}}, \qquad {\hat{\X}}^B_{0,1} = \frac{\D_{0,1}^B - \E_{0,1}^B}{\sqrt{2}},&\qquad  {\hat{\Z}}^B_{1,2} = \frac{\D_{1,2}^B + \E_{1,2}^B}{\sqrt{2}}, \qquad {\hat{\X}}^B_{1,2} = \frac{\D_{1,2}^B - \E_{1,2}^B}{\sqrt{2}}.
\end{align}
Physical operators $\Z^A_{i,j}$ and $\X_{i,j}^A$ are supposed to act as Pauli's $\sigma_z$ and $\sigma_x$ respectively on qubit subspaces spanned by vector basis $\{\ket{i}^A,\ket{j}^A\}$, hence the notation. Of course, these operators are uncharacterized and only after self-testing result is proven such notation is justified. Operators  $\hat{\Z}^B_{i,j}$ and $\hat{\X}_{i,j}^B$ are supposed to act as Pauli's $\sigma_z$ and $\sigma_x$ respectively on qubit subspaces spanned by vector basis $\{\ket{i}^B,\ket{j}^B\}$, but for the beginning we note that these operators unlike $\Z^A_{i,j}$ and $\X_{i,j}^A$ do not even need to be unitary. Let us now define operators which in ideal case on designated qubit subspaces act as identities:
\begin{align}
    \idd_{0,1,Z}^A = \M^A_{0|0} + \M^A_{1|0}, \qquad \idd_{1,2,Z}^A = \M^A_{1|0} + \M^A_{2|0}, &\qquad \idd_{0,1,X}^A = \M^A_{0|1} + \M^A_{1|1}, \qquad \idd_{1,2,X}^A = \M^A_{1|2} + \M^A_{2|2},\\
 \idd_{0,1,D}^B = \M^B_{0|0} + \M^B_{1|0}, \qquad \idd_{1,2,D}^B = \M^B_{1|2} + \M^B_{2|2}, &\qquad \idd_{0,1,E}^B = \M^B_{0|1} + \M^B_{1|1}, \qquad \idd_{1,2,E}^B = \M^B_{1|3} + \M^B_{2|3}
\end{align}
The operators $\idd_{0,1,D}^B$, $\idd_{0,1,E}^B$, $\idd_{1,2,D}^B$ and $\idd_{1,2,E}^B$ are projectors to the subspaces spanned by eigenvectors of $\D_{0,1}^B$, $\E_{0,1}^B$, $\D_{1,2}^B$ and  $\E_{1,2}^B$. In a similar fashion we want to define projectors to the subspaces spanned by ${\hat{\Z}}^B_{0,1}$, ${\hat{\X}}^B_{0,1}$, ${\hat{\Z}}^B_{0,1}$ and ${\hat{\X}}^B_{0,1}$. However, as we said earlier those operators are not necessarily unitary. However, we define the regularized version of this operators in the following way:
\begin{equation}
    {\Z}^B_{0,1} = \frac{{\hat{\Z}}^B_{0,1}}{|{\hat{\Z}}^B_{0,1}|}, \qquad {\X}^B_{0,1} = \frac{{\hat{\X}}^B_{0,1}}{|{\hat{\X}}^B_{0,1}|}, \qquad {\Z}^B_{1,2} = \frac{{\hat{\Z}}^B_{1,2}}{|{\hat{\Z}}^B_{1,2}|}, \qquad {\X}^B_{1,2} = \frac{{\hat{\X}}^B_{1,2}}{|{\hat{\X}}^B_{1,2}|}.
\end{equation}
Such renormalization of eigenvalues is not possible if any of the operators figuring in the numerators have eigenvectors with a corresponding eigenvalue equal to zero. If such case appears we just change all such eigenvalues from $0$ to $1$. In this way we obtain unitary operators ${\Z}^B_{i,j}$ and ${\X}^B_{i,j}$. Now we define subspace $\mathcal{B}_{i,j}$ which comprises the range of operators $D_{i,j}^B$ and $E_{i,j}^B$. Let $\idd_{i,j}^B$ be the projector to the subspace $\mathcal{B}_{i,j}$. The definition of  ${{\Z}}^B_{i,j}$ and ${{\X}}^B_{i,j}$ implies that the their range is exactly the subspace $\mathcal{B}_{i,j}$, and since all their eigenvalues are either $1$ or $-1$ the following equations hold
\begin{equation}
    {{\Z}^B_{i,j}}^2 = \idd_{i,j}^B, \qquad {{\X}^B_{i,j}}^2 = \idd_{i,j}^B
\end{equation}
Let us now define the following subnormalized states:
\begin{align}
    \ket{\psi_{i,j,Z,A}} = \idd_{i,j,Z}^A\ket{\psi} &\qquad \ket{\psi_{i,j,X,A}} = \idd_{i,j,X}^A\ket{\psi}\\
     \ket{\psi_{i,j,D,B}} = \idd_{i,j,D}^B\ket{\psi} &\qquad \ket{\psi_{i,j,E,B}} = \idd_{i,j,E}^B\ket{\psi},\qquad 
     \ket{\psi_{i,j,B}} = \idd_{i,j}^B\ket{\psi},
\end{align}
where we used notation $\ket{\psi}\equiv \ket{\psi}^{ABC} = \ket{\psi}^{AB_1}\otimes\ket{\psi}^{B_2C}$. Whenever state is written without any superscript it means that we are taking into account the whole state distributed in the network. Let us now analyse the consequences of the fact that physical experiment simulates the reference one. In the first step we concentrate on the following set of correlations:
\begin{align} \label{jedan}
    \bra{\psi}\M_{2|0}^A\otimes \idd^B\ket{\psi} = \bra{\psi}\M_{2|1}^A\otimes \idd^B\ket{\psi} = \bra{\psi}\idd^A\otimes\M_{2|0}^B\ket{\psi} = \bra{\psi}\idd^A\otimes\M_{2|1}^B\ket{\psi} = \frac{1}{3}\\ \label{dva}
    \bra{\psi}\M_{0|0}^A\otimes \idd^B\ket{\psi} = \bra{\psi}\M_{0|2}^A\otimes \idd^B\ket{\psi} = \bra{\psi}\idd^A\otimes\M_{0|2}^B\ket{\psi} = \bra{\psi}\idd^A\otimes\M_{0|3}^B\ket{\psi} = \frac{1}{3}\\ \label{tri}
    \bra{\psi}\M_{2|0}^A\otimes \M_{2|0}^B\ket{\psi} = \bra{\psi}\M_{2|1}^A\otimes \M_{2|0}^B\ket{\psi} = \bra{\psi}\M_{2|0}^A\otimes\M_{2|1}^B\ket{\psi} = \bra{\psi}\M_{2|1}^A\otimes\M_{2|1}^B\ket{\psi} = \frac{1}{3}\\ \label{cetiri}
    \bra{\psi}\M_{0|0}^A\otimes \M_{0|2}^B\ket{\psi} = \bra{\psi}\M_{0|2}^A\otimes \M_{0|2}^B\ket{\psi} = \bra{\psi}\M_{0|0}^A\otimes\M_{0|3}^B\ket{\psi} = \bra{\psi}\M_{0|2}^A\otimes\M_{0|3}^B\ket{\psi} = \frac{1}{3}
\end{align}
Given that in DI scenario we can consider all measurement operators as projectors, eqs.~\eqref{jedan},\eqref{dva} imply that the norms of vectors $\M_{2|0}^A\otimes \idd^B\ket{\psi}$, $\M_{2|1}^A\otimes \idd^B\ket{\psi}$, $\M_{0|0}^A\otimes \idd^B\ket{\psi}$, $\M_{0|2}^A\otimes \idd^B\ket{\psi}$, $\idd^A\otimes\M_{2|0}^B\ket{\psi}$,$\idd^A\otimes\M_{2|1}^B\ket{\psi}$,$\idd^A\otimes\M_{0|2}^B\ket{\psi}$ and $\idd^A\otimes\M_{0|3}^B\ket{\psi}$ is equal to $1/\sqrt{3}$. Eqs.~\eqref{tri},\eqref{cetiri} show that the inner product between two vectors of norm $1/\sqrt{3}$ as determined from eqs.~\eqref{jedan},\eqref{dva} is equal to $1/3$, which by saturation of the  Cauchy–Bunyakovsky–Schwarz inequality implies that vectors figuring in the inner product are parallel:
\begin{align}\label{peta}
    \M_{2|0}^A\otimes \idd^B\ket{\psi} = \idd^A\otimes\M_{2|0}^B\ket{\psi}, \qquad &\M_{2|0}^A\otimes \idd^B\ket{\psi} = \idd^A\otimes\M_{2|1}^B\ket{\psi},\\
    \M_{2|1}^A\otimes \idd^B\ket{\psi} = \idd^A\otimes\M_{2|0}^B\ket{\psi}, \qquad &\M_{2|1}^A\otimes \idd^B\ket{\psi} = \idd^A\otimes\M_{2|1}^B\ket{\psi},\\ \label{sedma}
    \M_{0|0}^A\otimes \idd^B\ket{\psi} = \idd^A\otimes\M_{0|2}^B\ket{\psi}, \qquad &\M_{0|0}^A\otimes \idd^B\ket{\psi} = \idd^A\otimes\M_{0|3}^B\ket{\psi},\\ \label{osma}
    \M_{0|2}^A\otimes \idd^B\ket{\psi} = \idd^A\otimes\M_{0|2}^B\ket{\psi}, \qquad &\M_{0|2}^A\otimes \idd^B\ket{\psi} = \idd^A\otimes\M_{0|3}^B\ket{\psi}
\end{align}
The tensor product form of the state $\ket{\psi} = \ket{\psi}^{AB_1}\otimes\ket{\psi}^{B_2C}$ and eqs.~\eqref{peta}-\eqref{osma} imply that measurements $\M_{2|0}^B$, $\M_{2|1}^B$, $\M_{0|2}^B$ and $\M_{0|3}^B$ act nontrivially only on Hilbert space $\mathcal{H}^{B_1}$.
Given that for all measurements it holds $\sum_{i}\M_{i|j} = \idd$, eqs.~\eqref{peta}-\eqref{osma} imply the following relations:
\begin{align}\label{skoro}
    \ket{\psi_{0,1,Z,A}} = \ket{\psi_{0,1,D,B}} = \ket{\psi_{0,1,E,B}} = \ket{\psi_{0,1,X,A}} \equiv \ket{\psi_{0,1}}\\
     \ket{\psi_{1,2,Z,A}} = \ket{\psi_{1,2,D,B}} = \ket{\psi_{1,2,E,B}} = \ket{\psi_{1,2,X,A}} \equiv \ket{\psi_{1,2}}
\end{align}
and furthermore the norm of $\ket{\psi_{0,1}}$ and $\ket{\psi_{1,2}}$ is equal to $\sqrt{\frac{2}{3}}$. 
Also since $\ket{\psi_{0,1,D,B}} = \ket{\psi_{0,1,E,B}}$ and $\ket{\psi_{1,2,D,B}} = \ket{\psi_{1,2,E,B}}$, given the definition of $\idd_{0,1}^B$ and $\idd_{1,2}^B$, it must be
\begin{equation}
    \ket{\psi_{0,1,B}} = \ket{\psi_{0,1}}, \qquad \ket{\psi_{1,2,B}} = \ket{\psi_{1,2}}.
\end{equation}

The simulation of reference correlations implies:
\begin{equation}
    \bra{\psi}\Z_{0,1}^A\otimes\D_{0,1}^B + \Z_{0,1}^A\otimes\E_{0,1}^B + \X_{0,1}^A\otimes\D_{0,1}^B - \X_{0,1}^A\otimes\E_{0,1}^B\ket{\psi} = \frac{2}{3}2\sqrt{2}
\end{equation}
A variant of sum-of-squares (SOS) decomposition of generalized shifted Bell operator reads:
\begin{multline}\label{sos}
    \sqrt{2}\left[\frac{\idd^A_{0,1,Z} + \idd^A_{0,1,X} + \idd^B_{0,1,D} + \idd^B_{0,1,E}}{\sqrt{2}} - \left(\Z_{0,1}^A\otimes\left(\D_{0,1}^B + \E_{0,1}^B\right)+ \X_{0,1}^A\otimes\left(\D_{0,1}^B-\E_{0,1}^B\right) \right)\right] = \\ = \left(\Z^A_{0,1} - \frac{\D_{0,1}^B + \E_{0,1}^B}{\sqrt{2}}\right)^2 + \left(\X^A_{0,1} - \frac{\D_{0,1}^B - \E_{0,1}^B}{\sqrt{2}}\right)^2,
\end{multline}
where we used the fact that ${\Z^A_{0,1}}^2 =  \idd^A_{0,1,Z}$, ${\X^A_{0,1}}^2 = \idd^A_{0,1,X}$, ${\D^B_{0,1}}^2 =  \idd^B_{0,1,D}$ and ${\E^B_{0,1}}^2 =  \idd^B_{0,1,E}$. Given that
\begin{equation}
\bra{\psi}\idd^A_{0,1,Z} + \idd^A_{0,1,X} + \idd^B_{0,1,D,} + \idd^B_{0,1,E}\ket{\psi} = \frac{8}{3},
\end{equation}
the l.h.s. of eq.~\eqref{sos} is equal to $0$, which means that both sums on the r.h.s. must be equal to $0$ as well, as they are squares of operators implying they must be both nonnegative. Hence, recalling notation introduced in~\eqref{hatoperators} this implies:
\begin{equation}\label{zz}
    \Z^{A}_{0,1}\ket{\psi} = {\hat{\Z}}^{B}_{0,1}\ket{\psi}, \qquad \X^{A}_{0,1}\ket{\psi} = {\hat{\X}}^{B}_{0,1}\ket{\psi}
\end{equation}
Since $\ket{\psi} = \ket{\psi}^{AB_1}\otimes\ket{\psi}^{B_2C}$, and on the l.h.s. of eqs.~\eqref{zz} identity operators act on $\ket{\psi}^{B_2C}$ we can conclude that operators ${\hat{\Z}}^{B}_{0,1}$ and ${\hat{\X}}^{B}_{0,1}$ act nontrivially only on Hilbert space $\mathcal{H}^{B_1}$. 

As operators ${\hat{Z}}_{0,1}^{B}$ and ${\hat{X}}_{0,1}^{B}$ anticommute by construction, operators ${{Z}}_{0,1}^{A}$ and ${{X}}_{0,1}^{A}$ anticommute on the support of $\tr_B[\proj{\psi}]$:
\begin{align} \nonumber
\{{{\X}}_{0,1}^{A},{{\Z}}_{0,1}^{A}\}\ket{\psi} &= {{\X}}_{0,1}^{A}{{\Z}}_{0,1}^{A}\ket{\psi} + {{\Z}}_{0,1}^{A}{{\X}}_{0,1}^{A}\ket{\psi}\\ \nonumber
&= {{\X}}_{0,1}^{A}{\hat{\Z}}_{0,1}^{B}\ket{\psi} + {{\Z}}_{0,1}^{A}{\hat{\X}}_{0,1}^{B}\ket{\psi}\\ \nonumber
&= {\hat{\Z}}_{0,1}^{B}{{\X}}_{0,1}^{A}\ket{\psi} + {\hat{\X}}_{0,1}^{B}{{\Z}}_{0,1}^{A}\ket{\psi}\\ \nonumber
&= {\hat{\Z}}_{0,1}^{B}{\hat{\X}}_{0,1}^{B}\ket{\psi} + {\hat{\X}}_{0,1}^{B}{\hat{\Z}}_{0,1}^{B}\ket{\psi}\\ \label{xz01}
&= 0.
\end{align}
The same procedure can be repeated for the correlations among $\Z^A_{1,2}$, $\X^A_{1,2}$, $\hat{\D}^A_{1,2}$ and $\hat{\E}^A_{1,2}$ to obtain:
\begin{align}\label{zz12}
    \Z^{A}_{1,2}\ket{\psi} = {\hat{\Z}}^{B}_{1,2}\ket{\psi}, &\qquad \X^{A}_{1,2}\ket{\psi} = {\hat{\X}}^{B}_{1,2}\ket{\psi}\\ \label{xz12}
    \{{{\X}}_{1,2}^{A},{{\Z}}_{1,2}^{A}\}\ket{\psi} = 0, &\qquad  \{{\hat{\X}}_{1,2}^{B},{\hat{\Z}}_{1,2}^{B}\}\ket{\psi} = 0
\end{align}
Again, as is the case for ${\hat{\Z}}^{B}_{0,1}$ and ${\hat{\X}}^{B}_{0,1}$, operators ${\hat{\Z}}^{B}_{1,2}$ and ${\hat{\X}}^{B}_{1,2}$ act nontrivially only on the Hilbert space $\mathcal{H}^{B_2}$.  As noted earlier, hatted operators do not necessarilly have eigenvalues $-1$ and $1$, and in prospect of using unitary operators to build self-testing isometry we defined the regularized operators $\Z_{0,1}^B$, $\X_{0,1}^B$, $\Z_{1,2}^B$ and $\X_{1,2}^B$ which are unitary by construction and they act on $\ket{\psi}$ in the same way as ${\hat{\Z}}_{0,1}^B$, ${\hat{\X}}_{0,1}^B$, ${\hat{\Z}}_{1,2}^B$ and ${\hat{\X}}_{1,2}^B$ respectfully. The proof of this is described in details in Appendix A2 in~\cite{SupicBowles}. 
Let us now introduce projective operators
\begin{align}
    \Pp_{0}^B &= \frac{\idd_{0,1,Z}^B + \Z_{0,1}^B}{2},\\
    \Pp_{1}^B &= \frac{\idd_{0,1,Z}^B - \Z_{0,1}^B}{2} = \frac{\idd_{0,1,Z}^B + \Z_{1,2}^B}{2},\\
    \Pp_{2}^B &= \frac{\idd_{0,1,Z}^B - \Z_{1,2}^B}{2}
\end{align}
which together with  $\omega = \exp{\frac{i2\pi}{3}}$ form the operators used to build the self-testing isometry
\begin{align}\label{za}
    \Z^A &= \sum_{j=0}^2\omega^j\M^A_{j|0},\\ \label{zb}
    \Z^B &= \sum_{j=0}^2\omega^j\Pp^B_{j},\\ \label{x1}
    {\X^{(1)}}^{A} &= \X_{0,1}^{A} + \idd^{A} - \idd_{0,1,X}^{A},\qquad  {\X^{(1)}}^{B} = \X_{0,1}^{B} + \idd^{B} - \idd_{0,1}^{B},\\
    {\X^{(2)}}^{A} &= {\X^{(1)}}^{A}\left(\idd^{A} - \idd_{1,2,X}^{A}+\X_{1,2}^{A}\right), \qquad {\X^{(2)}}^{B} = {\X^{(1)}}^{B}\left(\idd^{B} - \idd_{1,2}^{B}+\X_{1,2}^{B}\right). \label{x2}
\end{align}
Operators $\Z^{A/B}$, ${\X^{(1)}}^{A/B}$
and ${\X^{(2)}}^{A/B}$ are unitary. Indeed, operators $\Z^{A/B}$ are constructed as a sum of projectors with eigenvalues squaring to $1$. For operators ${\X^{(1)}}^{A}$
and ${\X^{(2)}}^{A}$ unitarity can be proven in the following way
\begin{align}
    {\X^{(1)}}^{A}{{\X^{(1)}}^\dagger}^{A} &=\left(\X_{0,1}^{A} + \idd^{A} - \idd_{0,1,X}^{A}\right)\left(\X_{0,1}^{A} + \idd^{A} - \idd_{0,1,X}^{A}\right)\\
    &= \idd\\
     {\X^{(2)}}^{A}{{\X^{(2)}}^\dagger}^{A} &= {\X^{(1)}}^{A}\left(\idd^{A} - \idd_{1,2,X}^{A}+\X_{1,2}^{A}\right)\left(\idd^{A} - \idd_{1,2,X}^{A}+{\X_{1,2}^\dagger}^{A}\right){{\X^{(1)}}^\dagger}^{A}\\
     &= {\X^{(1)}}^{A}{{\X^{(1)}}^\dagger}^{A}\\
     &= \idd.
\end{align}
The same unitarity proof holds for ${\X^{(1)}}^{B}$
and ${\X^{(2)}}^{B}$. Note that ${\X^{(1)}}^{A}$ can alternatively be written as
\begin{align}\label{x1Aalt}
    {\X^{(1)}}^{A} = {\X^{A}_{0,1}} + \M^{A}_{2|0},
\end{align}
while for ${\X^{(1)}}^{B}$ holds:
\begin{equation}\label{x1Balt}
    {\X^{(1)}}^{B}\ket{\psi} = {\X^{B}_{0,1}}\ket{\psi} + \M^{B}_{2|0}\ket{\psi}.
\end{equation}
Similarly we obtain the following relations:
\begin{align}\label{X12alt}
    \idd^{A} - \idd_{1,2,X}^{A}+\X_{1,2}^{A} &= \M_{0|2} + \X_{1,2}^{A},\\
    \left(\idd^{B} - \idd_{1,2}^{B}+\X_{1,2}^{B}\right)\ket{\psi} &= \left(\M_{0|2}^B + \X_{1,2}^{B}\right)\ket{\psi}.
\end{align}
Eqs.~\eqref{zz} and \eqref{zz12} imply
\begin{align}\label{parallelz}
    \Z^A\ket{\psi} &= \Z^B\ket{\psi},\\
    {\X^{(1)}}^{A}\ket{\psi} &= {\X^{(1)}}^{B}\ket{\psi}\\
    {\X^{(2)}}^{A}\ket{\psi} &= {{\X^{(2)}}^\dagger}^{B}\ket{\psi}.
\end{align}
Eq.~\eqref{parallelz} implies
\begin{equation}\label{deltaz}
    \M_{j|0}^A\otimes\Pp_k^B\ket{\psi} = \delta_{j,k}\M_{j|0}^A\ket{\psi}.
\end{equation}

\begin{figure}
	\begin{center}
	\includegraphics[width=0.6\columnwidth]{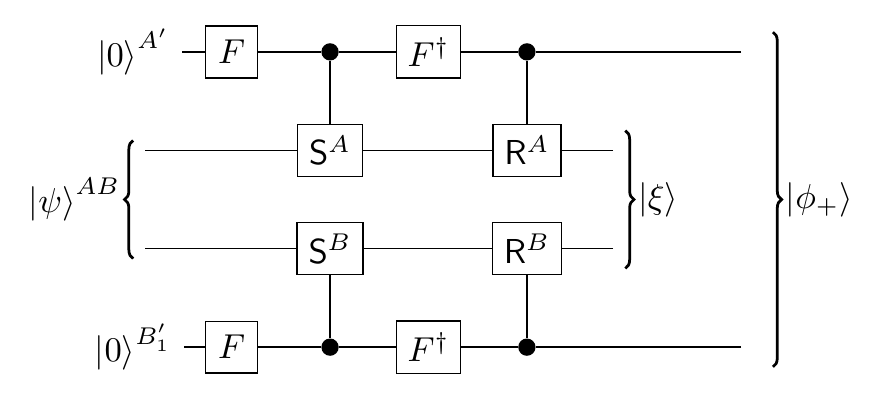}
	\end{center}
	\caption{Isometry for self-testing  maximally entangled pair of qutrits.
	}
	\label{fig:iso}
\end{figure}
The self-testing isometry $\Phi = \Phi_A\otimes\Phi_{B_1}$ is given on Fig.~\ref{fig:iso}. Note that operators $\Z^B$ and ${\X^{(1/2)}}^B$ are built from measurements that act nontrivially only on Hilbert space $\mathcal{H}^{B_1}$, which means that the same holds for the isometry $\Phi_{B_1}$. Operator $F$ is the Fourier transform acting in the following way: $F\ket{j} = \frac{1}{\sqrt{3}}\sum_{k=0}^2\omega^{jk}\ket{k}$. The controlled gates $C\Ss^{A/B}$ and $C\R^{A/B}$ are defined as:
\begin{align}
  C\Ss^{A/B}\ket{j}^{A'/B_1'}\ket{\psi} &= \ket{j}^{A'/B_1'}{\Z^j}^{A/B}\ket{\psi},\\
  C\R^{A/B}\ket{j}^{A'/B_1'}\ket{\psi} &= \ket{j}^{A'/B_1'}{\X^{(j)}}^{A/B}\ket{\psi},
\end{align}
where $\Z^j$ is simply $j$-th power of $\Z$, $\X^{(0)} = \idd$ and $\X^{(j)}$ are given in eqs.~\eqref{x1},\eqref{x2}. The output state of the isometry is:
\begin{align}
    \Phi(\ket{\psi}^{AB}\otimes\ket{00}^{A'B_1'}) = \sum_{j,k=0}^2{\X^{(j)}}^A\M_{j|0}^A\otimes{\X^{(k)}}^B\Pp_k^B\ket{\psi}^{ABC}\otimes\ket{jk}^{A'B_1'},
\end{align}
where we used the identities
\begin{align}
    \M_{0|0}^A = \frac{1}{3}\left({\Z^0}^A + {\Z^1}^A+ {\Z^2}^A\right), \quad &\M_{1|0}^A = \frac{1}{3}\left({\Z^0}^A + \omega{\Z^1}^A+ \omega^*{\Z^2}^A\right),\quad \M_{2|0}^A = \frac{1}{3}\left({\Z^0}^A + \omega^*{\Z^1}^A+ \omega{\Z^2}^A\right),\\
    \Pp_{0}^B = \frac{1}{3}\left({\Z^0}^B + {\Z^1}^B+ {\Z^2}^B\right), \quad &\Pp_{1}^B = \frac{1}{3}\left({\Z^0}^B + \omega{\Z^1}^B+ \omega^*{\Z^2}^B\right),\quad \Pp_{2}^B = \frac{1}{3}\left({\Z^0}^B + \omega^*{\Z^1}^B+ \omega{\Z^2}^B\right),
\end{align}
as per definition of $\Z^{A/B}$ given in eqs.~\eqref{za},\eqref{zb}.
Eqs.~\eqref{deltaz} imply that only surviving elements of the sum are those containing $\ket{jk}^{A'B'}$ such that $j=k$. Hence, the explicitly written full output state $\Phi(\ket{\psi}^{AB}\otimes\ket{00}^{A'B_1'})$ reads
\begin{align}
    \Phi(\ket{\psi}^{ABC}\otimes\ket{00}^{A'B_1'}) = \M_{0|0}^A\ket{\psi}^{ABC}\otimes\ket{00}^{A'B_1'} + {\X^{(1)}}^A\M_{1|0}^A\otimes{\X^{(1)}}^B\ket{\psi}^{ABC}\otimes\ket{11}^{A'B_1'} + {\X^{(2)}}^A\M_{2|0}^A\otimes{\X^{(2)}}^B\ket{\psi}^{ABC}\otimes\ket{22}^{A'B_1'}
\end{align}
We simplify the second term on the r.h.s. of the last eq.:
\begin{align}
    {\X^{(1)}}^A\M_{1|0}^A\otimes{\X^{(1)}}^B\ket{\psi} &= \left({\X^{A}_{0,1}} + \M^{A}_{2|0}\right)\frac{\idd_{0,1,Z}-\Z_{0,1}^A}{2}\otimes\left({\X^{B}_{0,1}} + \M^{B}_{2|0}\right)\ket{\psi}\\&=
    \frac{\idd_{0,1,Z}+\Z_{0,1}^A}{2}\left({\X^{A}_{0,1}} + \M^{A}_{2|0}\right)\otimes\left({\X^{B}_{0,1}} + \M^{B}_{2|0}\right)\ket{\psi}\\
    &= \M_{0|0}^A\ket{\psi}
\end{align}
In the first line we used eqs.~\eqref{x1Aalt} and \eqref{x1Balt}. To get the second line we used the anticommutation between $\X_{0,1}^A$ and $\Z_{0,1}^A$. The third eq. was obtained from~\eqref{zz} and the fact that ${X_{0,1}^A}^2\ket{\psi} = \idd_{0,1,X}^A\ket{\psi} = \ket{\psi_{0,1}}$. In the last equation we used the fact that $\M_{0|0}^A =  \frac{\idd_{0,1,Z}+\Z_{0,1}^A}{2}$ and  $\ket{\psi_{0,1}} = (\M_{0|0}^A+\M_{1|0}^A)\ket{\psi}$.
Hence, we get simplified output state 
\begin{align}
    \Phi(\ket{\psi}^{ABC}\otimes\ket{00}^{A'B_1'}) = \M_{0|0}^A\ket{\psi}^{ABC}\otimes\left(\ket{00}^{A'B'} + \ket{11}^{A'B_1'}\right) + {\X^{(2)}}^A\M_{2|0}^A\otimes{\X^{(2)}}^B\ket{\psi}^{ABC}\otimes\ket{22}^{A'B_1'}.
\end{align}
Now we take care of the last term
\begin{align}
    {\X^{(2)}}^A\M_{2|0}^A\otimes{\X^{(2)}}^B\ket{\psi} &= \left({\X^{A}_{0,1}} + \M^{A}_{2|0}\right)\left(\M_{0|2}^{A}+\X_{1,2}^{A}\right)\frac{\idd_{1,2,Z}-\Z^A_{1,2}}{2}\otimes\left({\X^{B}_{0,1}} + \M^{B}_{2|0}\right)\left(\M_{0|2}^{B}+\X_{1,2}^{B}\right)\ket{\psi}\\
    &= \left({\X^{A}_{0,1}} + \M^{A}_{2|0}\right)\frac{\idd_{1,2,Z}+\Z^A_{1,2}}{2}\left(\M_{0|2}^{A}+\X_{1,2}^{A}\right)\otimes\left({\X^{B}_{0,1}} + \M^{B}_{2|0}\right)\left(\M_{0|2}^{B}+\X_{1,2}^{B}\right)\ket{\psi}\\
    &= \left({\X^{A}_{0,1}} + \M^{A}_{2|0}\right)\frac{\idd_{0,1,Z}-\Z^A_{0,1}}{2}\otimes\left({\X^{B}_{0,1}} + \M^{B}_{2|0}\right)\ket{\psi}\\
    &= \frac{\idd_{0,1,Z}+\Z^A_{0,1}}{2}\left({\X^{A}_{0,1}} + \M^{A}_{2|0}\right)\otimes\left({\X^{B}_{0,1}} + \M^{B}_{2|0}\right)\ket{\psi}\\
    &= \M_{0|0}^A\ket{\psi}.
\end{align}
In the first line we used eqs.~\eqref{X12alt}. In the second line we used the anticommutation relation between $\Z_{1,2}^A$ and $\X_{1,2}^A$ (cf.~\eqref{xz12}). To get the third line we used the fact that $\left(\M_{0|2}^{A}+\X_{1,2}^{A}\right)\otimes \left(\M_{0|2}^{B}+\X_{1,2}^{B}\right)\ket{\psi} = \ket{\psi}$ and also equality $\idd_{1,2,Z}+\Z^A_{1,2} = \idd_{0,1,Z}-\Z^A_{0,1}$. The last two lines are the consequence of the anticommutation relation between $\Z_{1,2}^A$ and $\X_{1,2}^A$ (cf.~\eqref{xz01}) and relation $\left(\M_{2|0}^{A}+\X_{0,1}^{A}\right)\otimes \left(\M_{2|0}^{B}+\X_{0,1}^{B}\right)\ket{\psi} = \ket{\psi}$. Finally, we obtain the self-testing remark for the state:
\begin{align}\label{ststate}
    \Phi(\ket{\psi}^{ABC}\otimes\ket{00}^{A'B_1'}) = \sqrt{3}\M_{0|0}^A\ket{\psi}^{ABC}\otimes\ket{\phi_+}^{A'B_1'} \equiv \ket{\xi}^{ABC} \otimes \ket{\phi_+}^{A'B_1'},
\end{align}
where we introduced notation $\ket{\xi}^{ABC} = \left(\sqrt{3}\M_{0|0}^A\ket{\psi}^{AB_1}\right)\otimes\ket{\psi}^{B_2C}$, which is a valid quantum state, because the norm of $\M_{0|0}^A\ket{\psi}^{AB_1}$ is equal to $1/\sqrt{3}$ (cf.~\eqref{sedma}). 
Now we move to the self-testing of measurements. Let us start with measurement $\M_{l|0}^A$, and see how the self-testing isometry maps the state $\M_{l|0}^A\ket{\psi}^{ABC}$:
\begin{align}\nonumber
    \Phi(\M_{l|0}^A\ket{\psi}^{ABC}\otimes\ket{00}^{A'B_1'}) &= \sum_{j,k=0}^2{\X^{(j)}}^A\M_{j|0}^A\M_{l|0}^A\otimes{\X^{(k)}}^B\Pp_k^B\ket{\psi}^{ABC}\otimes\ket{jk}^{A'B_1'}\\ \nonumber
    &= \sum_{j,k=0}^2{\X^{(j)}}^A\delta_{j,l}\M_{l|0}^A\otimes{\X^{(k)}}^B\Pp_k^B\ket{\psi}^{ABC}\otimes\ket{jk}^{A'B_1'}\\ \nonumber
    &= \ket{\xi}^{ABC}\otimes\frac{1}{\sqrt{3}}\ket{ll}^{A'B_1'}\\ \label{stmeas0}
    &= \ket{\xi}^{ABC}\otimes{\M'}_{l|0}^{A'}\ket{\phi_+}^{A'B_1'},
\end{align}
which is exactly the self-testing statement for $\M'_{l|0}$. To get the second line we used orthogonality of projectors $\{\M_{j|0}\}_j$, and the following two lines just reproduce the proof of self-testing the state. Exactly the same proof holds for measurement operators $\M_{2|1}^A$ and $\M_{0|2}^{A}$, as they act on $\ket{\psi}$ in the same way as $\M_{2|0}^A$ and $\M_{0|0}^A$ respectively. Concerning self-testing of $\M^A_{0|1}$ and $\M^A_{1|1}$ given that $\X_{0,1}^A = \M^A_{0|1} - \M^A_{1|1}$  and $\idd_{0,1,X}^A = \M^A_{0|1} + \M^A_{1|1}$, self-testing $\X_{0,1}^A$ and $\idd_{0,1,X}^A$ is equivalent to self-testing $\M^A_{0|1}$ and $\M^A_{1|1}$. Given that $\idd_{0,1,X}^A\ket{\psi} = \idd_{0,1,Z}^A\ket{\psi}$ and $\idd_{0,1,Z} = \M_{0|0}^A + \M_{1|0}^A$ the self-testing statement for $\{\M_{j|0}^A\}_j$ allows to conclude 
\begin{align}\label{idd01st}
    \Phi(\idd_{0,1,X}^A\ket{\psi}^{ABC}\otimes\ket{00}^{A'B_1'}) &= \ket{\xi}^{ABC}\otimes (\proj{0}^{A'} + \proj{1}^{A'})\ket{\phi_+}^{A'B_1'}.
\end{align}
We now turn to self-testing od $\X_{0,1}^A$:
\begin{align}
    \Phi(\X_{0,1}^A\ket{\psi}^{ABC}\otimes\ket{00}^{A'B_1'}) &= \sum_{j,k=0}^2{\X^{(j)}}^A\M_{j|0}^A\X_{0,1}^A\otimes{\X^{(k)}}^B\Pp_k^B\ket{\psi}^{ABC}\otimes\ket{jk}^{A'B_1'}\\
    &= \sum_{j,k=0}^1{\X^{(j)}}^A\M_{j|0}^A\X_{0,1}^A\otimes{\X^{(k)}}^B\Pp_k^B\ket{\psi}^{ABC}\otimes\ket{jk}^{A'B_1'}. \\ 
\end{align}
Note that in the second line the sum has different upper bounds. Let us first prove that all elements of the sum corresponding to $k=2$ vanish:
\begin{align}
\sum_{j=0}^2{\X^{(j)}}^A\M_{j|0}^A\X_{0,1}^A\otimes{\X^{(2)}}^B\Pp_2^B\ket{\psi}^{ABC}\otimes\ket{j2}^{A'B_1'} &= \sum_{j}^2{\X^{(j)}}^A\M_{j|0}^A\X_{0,1}^A\M_{2|0}^A\otimes{\X^{(2)}}^B\ket{\psi}^{ABC}\otimes\ket{j2}^{A'B_1'}\\
&=\sum_{j}^2{\X^{(j)}}^A\M_{j|0}^A\left(\M_{0|1}^A-\M_{1|1}^A\right)\M_{2|1}^A\otimes{\X^{(2)}}^B\ket{\psi}^{ABC}\otimes\ket{j2}^{A'B_1'}\\
&= 0,
\end{align}
where in the second line we used eq.~\eqref{deltaz}, and in the second we used eq.~\eqref{skoro} and definition of $\X_{0,1}^A$. Finally, to get the last line we used orthogonality of projectors corresponding to the same measurement. Now we consider elements of the sum corresponding to $j=2$:
\begin{align}
   \sum_{k=0}^1{\X^{(2)}}^A\M_{2|0}^A\X_{0,1}^A\otimes{\X^{(k)}}^B\Pp_k^B\ket{\psi}^{ABC}\otimes\ket{2k}^{A'B_1'}
  &= \sum_{k=0}^1{\X^{(2)}}^A\M_{2|0}^A\otimes{\X^{(k)}}^B\Pp_k^B\X_{0,1}^B\ket{\psi}^{ABC}\otimes\ket{2k}^{A'B_1'}\\
  &= \sum_{k=0}^1{\X^{(2)}}^A\otimes{\X^{(k)}}^B\Pp_k^B\X_{0,1}^B\Pp_2^B\ket{\psi}^{ABC}\otimes\ket{2k}^{A'B_1'}\\
  &= 0.
\end{align}
In the first line we used eq.~\eqref{zz} and the fact that hatted and nonhatted operators of Bob act on $\ket{\psi}$ in the same way. In the second eq. we used eq.~\eqref{parallelz}, and to get the last eq. we used the fact that ranges of $\Pp_2^B$ and $\X_{0,1}^B$ are orthogonal.
 Let us write the whole remaining state
\begin{align}\nonumber
    \Phi(\X_{0,1}^A\ket{\psi}^{ABC}\otimes\ket{00}^{A'B_1'}) &= \M_{0|0}^A\X_{0,1}^A\otimes\Pp_0^B\ket{\psi}^{ABC}\otimes\ket{00}^{A'B_1'} + \M_{0|0}^A\X_{0,1}^A\otimes{\X^{(1)}}^B\Pp_1^B\ket{\psi}^{ABC}\otimes\ket{01}^{A'B_1'} + \\ \nonumber &\qquad + {\X^{(1)}}^A\M_{1|0}^A\X_{0,1}^A\otimes\Pp_0^B\ket{\psi}^{ABC}\otimes\ket{10}^{A'B_1'} + {\X^{(1)}}^A\M_{1|0}^A\X_{0,1}^A\otimes{\X^{(1)}}^B\Pp_1^B\ket{\psi}^{ABC}\otimes\ket{11}^{A'B_1'}\\ \nonumber &= \X_{0,1}^A\M_{1|0}^A\otimes\Pp_0^B\ket{\psi}^{ABC}\otimes\ket{00}^{A'B_1'} + \X_{0,1}^A\M_{1|0}^A\otimes{\X^{(1)}}^B\Pp_1^B\ket{\psi}^{ABC}\otimes\ket{01}^{A'B_1'} + \\ \nonumber &\qquad + {\X^{(1)}}^A\X_{0,1}^A\M_{0|0}^A\otimes\Pp_0^B\ket{\psi}^{ABC}\otimes\ket{10}^{A'B_1'} + {\X^{(1)}}^A\X_{0,1}^A\M_{0|0}^A\otimes{\X^{(1)}}^B\Pp_1^B\ket{\psi}^{ABC}\otimes\ket{11}^{A'B_1'}\\ \nonumber
    &= \X_{0,1}^A\M_{1|0}^A\otimes{\X^{(1)}}^B\Pp_1^B\ket{\psi}^{ABC}\otimes\ket{01}^{A'B_1'} + {\X^{(1)}}^A\X_{0,1}^A\M_{0|0}^A\otimes\Pp_0^B\ket{\psi}^{ABC}\otimes\ket{10}^{A'B_1'}\\ \nonumber
    &=\M_{0|0}^A\X_{0,1}^A\otimes{\X^{(1)}}^B\ket{\psi}^{ABC}\otimes\ket{01}^{A'B_1'} + \M_{0|0}^A
    \idd_{0,1,X}^A\otimes\Pp_0^B\ket{\psi}^{ABC}\otimes\ket{10}^{A'B_1'}\\ \nonumber 
    &= \M_{0|0}^A\ket{\psi}^{ABC}\otimes (\ket{01}^{A'B_1'} + \ket{10}^{A'B_1'})\\ \label{stX01}
    &\junk^{ABC}\otimes \left(\ketbra{0}{1}^{A'} + \ketbra{1}{0}^{A'}\right)\ket{\phi}^{A'B_1'}.
\end{align}
The second equality (the third and fourth lines) is obtained by using anticommutation relation between $X_{0,1}^A$ and $\Z_{0,1}^A$ (cf.~\eqref{xz01}). In the fifth line we used the fact that $\M_{1|0}^A\otimes\Pp_0^B\ket{\psi} = 0$ and $\M_{0|0}^A\otimes\Pp_1^B\ket{\psi} = 0$ (cf.~\eqref{deltaz}). To get the sixth and seventh lines we used again the anticommutation between $\Z_{0,1}^A$ and $\X_{0,1}^A$ and relation $\X_{0,1}^A{\X^{1}}^A\ket{\psi} = \idd_{0,1,X}\ket{\psi}$.
Given the definition of $\M'_{0|1}$ and $\M'_{1|1}$, eqs.~\eqref{idd01st} and~\eqref{stX01} imply
\begin{align}\label{stmeas1}
    \Phi(\M_{j|1}^A\ket{\psi}^{ABC}\otimes\ket{00}^{A'B_1'}) &= \ket{\xi}^{ABC}\otimes{\M'}_{j|1}^{A'}\ket{\phi_+}^{A'B_1'},
\end{align}
for $j = 0,1,2$. Completely analogous proof holds for self-testing the third Alice's measurement:
\begin{align}\label{stmeas2}
    \Phi(\M_{j|2}^A\ket{\psi}^{ABC}\otimes\ket{00}^{A'B_1'}) &= \ket{\xi}^{ABC}\otimes{\M'}_{j|2}^{A'}\ket{\phi_+}^{A'B_1'},
\end{align}
for $j = 0,1,2$.

Eqs.~\eqref{stmeas0},\eqref{stmeas1} and~\eqref{stmeas2} together imply:
\begin{align}\label{stmeasAlice}
    \Phi(\M_{a|x}^A\ket{\psi}^{ABC}\otimes\ket{00}^{A'B_1'}) &= \ket{\xi}^{ABC}\otimes{\M'}_{a|x}^{A'}\ket{\phi_+}^{A'B_1'},
\end{align}
The set of eqs.~\eqref{peta}-\eqref{osma} implies that operators $\M_{2|0}^B$, $\M_{2|1}^B$, $\M_{0|2}^B$ and $\M_{0|3}^B$ act on $\ket{\psi}$ in the same way as Alice's measurements self-tested in~\eqref{stmeasAlice} which implies:
\begin{align}\label{jedanmeasB}
   \Phi(\M_{2|0}^B\ket{\psi}^{ABC}\otimes\ket{00}^{A'B_1'}) &= \ket{\xi}^{ABC}\otimes{\M'}_{2|0}^{B_1'}\ket{\phi_+}^{A'B_1'}, \\
   \Phi(\M_{2|1}^B\ket{\psi}^{ABC}\otimes\ket{00}^{A'B_1'}) &= \ket{\xi}^{ABC}\otimes{\M'}_{2|1}^{B_1'}\ket{\phi_+}^{A'B_1'}, \\
   \Phi(\M_{0|2}^B\ket{\psi}^{ABC}\otimes\ket{00}^{A'B_1'}) &= \ket{\xi}^{ABC}\otimes{\M'}_{0|2}^{B_1'}\ket{\phi_+}^{A'B_1'}, \\ \label{cetirimeasB}
   \Phi(\M_{0|3}^B\ket{\psi}^{ABC}\otimes\ket{00}^{A'B_1'}) &= \ket{\xi}^{ABC}\otimes{\M'}_{0|3}^{B_1'}\ket{\phi_+}^{A'B_1'}.
\end{align}
Let us now define reference operators:
\begin{align}
\X'_{0,1} = \begin{bmatrix}
    0 & 1 & 0\\ 1 & 0 & 0\\ 0& 0 & 0
    \end{bmatrix}, \qquad \Z'_{0,1} = \begin{bmatrix}
    1 & 0 & 0\\ 0 & -1 & 0\\ 0& 0 & 0
    \end{bmatrix}, \qquad \X'_{1,2} = \begin{bmatrix}
    0 & 0 & 0\\ 0 & 0 & 1\\ 0& 1 & 0
    \end{bmatrix}, \qquad \Z'_{1,2} = \begin{bmatrix}
    0 & 0 & 0\\ 0 & 1 & 0\\ 0& 0 & -1
    \end{bmatrix}.
\end{align}
Eqs.~\eqref{stmeasAlice} implies:
\begin{align}\label{jedanOp}
    \Phi(\X_{0,1}^A\ket{\psi}^{ABC}\otimes\ket{00}^{A'B_1'}) &= \ket{\xi}^{ABC}\otimes{\X'}_{0,1}^{A'}\ket{\phi_+}^{A'B_1'},\\
    \Phi(\Z_{0,1}^A\ket{\psi}^{ABC}\otimes\ket{00}^{A'B_1'}) &= \ket{\xi}^{ABC}\otimes{\Z'}_{0,1}^{A'}\ket{\phi_+}^{A'B_1'},\\
    \Phi(\X_{1,2}^A\ket{\psi}^{ABC}\otimes\ket{00}^{A'B_1'}) &= \ket{\xi}^{ABC}\otimes{\X'}_{1,2}^{A'}\ket{\phi_+}^{A'B_1'},\\ \label{cetiriOp}
    \Phi(\Z_{1,2}^A\ket{\psi}^{ABC}\otimes\ket{00}^{A'B_1'}) &= \ket{\xi}^{ABC}\otimes{\Z'}_{1,2}^{A'}\ket{\phi_+}^{A'B_1'}.
\end{align}
Now eqs.~\eqref{zz} and \eqref{zz12} together with the set of eqs.~\eqref{jedanOp}-\eqref{cetiriOp} lead to:
\begin{align}\label{jedanOpB}
    \Phi(\X_{0,1}^B\ket{\psi}^{ABC}\otimes\ket{00}^{A'B_1'}) &= \ket{\xi}^{ABC}\otimes{\X'}_{0,1}^{B_1'}\ket{\phi_+}^{A'B_1'},\\
    \Phi(\Z_{0,1}^B\ket{\psi}^{ABC}\otimes\ket{00}^{A'B_1'}) &= \ket{\xi}^{ABC}\otimes{\Z'}_{0,1}^{B_1'}\ket{\phi_+}^{A'B_1'},\\
    \Phi(\X_{1,2}^B\ket{\psi}^{ABC}\otimes\ket{00}^{A'B_1'}) &= \ket{\xi}^{ABC}\otimes{\X'}_{1,2}^{B_1'}\ket{\phi_+}^{A'B_1'},\\ \label{cetiriOpB}
    \Phi(\Z_{1,2}^B\ket{\psi}^{ABC}\otimes\ket{00}^{A'B_1'}) &= \ket{\xi}^{ABC}\otimes{\Z'}_{1,2}^{B_1'}\ket{\phi_+}^{A'B_1'},
\end{align}
and equivalently
\begin{align}\label{jedanOpD}
    \Phi(\D_{0,1}^B\ket{\psi}^{ABC}\otimes\ket{00}^{A'B_1'}) &= \ket{\xi}^{ABC}\otimes\frac{{\X'}_{0,1}^{B_1'}+{\Z'}_{0,1}^{B_1'}}{\sqrt{2}}\ket{\phi_+}^{A'B_1'},\\
\Phi(\E_{0,1}^B\ket{\psi}^{ABC}\otimes\ket{00}^{A'B_1'}) &= \ket{\xi}^{ABC}\otimes\frac{{\Z'}_{0,1}^{B_1'}-{\X'}_{0,1}^{B_1'}}{\sqrt{2}}\ket{\phi_+}^{A'B_1'},\\
 \Phi(\D_{1,2}^B\ket{\psi}^{ABC}\otimes\ket{00}^{A'B_1'}) &= \ket{\xi}^{ABC}\otimes\frac{{\X'}_{1,2}^{B_1'}+{\Z'}_{1,2}^{B_1'}}{\sqrt{2}}\ket{\phi_+}^{A'B_1'},\\ \label{cetiriOpE}
   \Phi(\E_{1,2}^B\ket{\psi}^{ABC}\otimes\ket{00}^{A'B_1'}) &= \ket{\xi}^{ABC}\otimes\frac{{\Z'}_{1,2}^{B_1'}-{\X'}_{1,2}^{B_1'}}{\sqrt{2}}\ket{\phi_+}^{A'B_1'}.
\end{align}
The last set of equations together with definitions of $\D_{j,k}$ and $\E_{j,k}$ and eqs. \eqref{jedanmeasB}-\eqref{cetirimeasB} imply:
\begin{align}\label{stmeasBob}
    \Phi(\M_{b_1|y}^B\ket{\psi}^{ABC}\otimes\ket{00}^{A'B_1'}) &= \ket{\xi}^{ABC}\otimes{\M'}_{b_1|y}^{B_1'}\ket{\phi_+}^{A'B_1'}.
\end{align}
Eqs.~\eqref{stmeasAlice} and~\eqref{stmeasBob} together give
\begin{align}\label{stmeasAB}
    \Phi(\M_{a|x}^A\otimes\M_{b_1|y}^B\ket{\psi}^{ABC}\otimes\ket{00}^{A'B_1'}) &= \ket{\xi}^{ABC}\otimes{\M'}_{a|x}^{A'}\otimes{\M'}_{b_1|y}^{B_1'}\ket{\phi_+}^{A'B_1'},
\end{align}
where
\begin{align}
    \ket{\xi}^{ABC} &= \left(\sqrt{3}\M_{0|0}^A\ket{\psi}^{AB_1}\right)\otimes\ket{\psi}^{B_2C}\\
    &\equiv \ket{\xi_1}^{AB_1}\otimes\ket{\psi}^{B_2C}.
\end{align}

\section{Proof of Eq. (6)}\label{suppmat2}

Eq.~\eqref{ststate} together with~\eqref{stmeasBob} leads to
\begin{equation}
\Phi_{B_1}^\dagger\left(\M^B_{b_1|y}\otimes\idd^{B_1'}\right) = \idd^{B}\otimes{\M'}_{b_1|y_1}^{B'_1}.
\end{equation}
Since $\M'_{b_1|y_1}$ are projective it must be $\Phi_{B_1}^\dagger\left(\M^B_{b_1,b_2|y}\otimes\idd^{B_1'}\right) = \K_{b_1,b_2|y}^{B}\otimes{\M'}_{b_1|y}^{B'_1}$, where $\K_{b_1,b_2|y}$ is positive semidefinite and $\sum_{b_2}\K_{b_1,b_2|y} = \idd$. With this insight, and eqs.~\eqref{stmeasAlice} and \eqref{stmeasBob} for every collection of inputs $x,y,z$ and outputs $a,b,c$  the equivalence between reference correlations given in eq.~\eqref{ReferenceCorrelations} and physical correlations given in~\eqref{PhysicalCorrelations} implies:
\begin{align}
   \tr\left[\left({\M'}_{a|x}^{A'}\otimes {\M'}_{b_1|y}^{B'_1}\right){\phi}_+^{A'B'_1}\right]\tr\left[\left({\M'}_{b_2|y}^{B'_2}\otimes{\M'}_{c|z}^{C'}\right){\phi}_+^{B'_2C'}\right] = \bra{\psi}^{AB_1}\otimes\bra{\psi}^{B_2C}{\M}_{a|x}^{A}\otimes {\M}_{b_1,b_2|y}^{B_1B_2}\otimes\M_{c|z}^C\ket{\psi}^{AB_1}\otimes\ket{\psi}^{B_2C},\\
   = \bra{\xi}^{ABC}\idd^{A}\otimes\K_{b_1,b_2|y}^{B}\otimes \M_{c|z}^C\ket{\xi}^{ABC} \bra{\phi_+}^{A'B_1'}{\M'}_{a|x}^{A'}\otimes {\M'}_{b_1|y}^{B'_1}\ket{\phi_+}^{A'B_1'}
\end{align}
By cancelling the same terms on two sides of equality we obtain:
\begin{equation}
    \tr\left[\left({\M'}_{b_2|y}^{B'_2}\otimes{\M'}^{C'}_{c|z}\right){\phi}_+^{B'_2C'}\right] = \bra{\xi}^{ABC}\idd^{A}\otimes\K_{b_1,b_2|y}^{B}\otimes \M_{c|z}^C\ket{\xi}^{ABC}
\end{equation}
Since for every fixed $b_1$ the set of operators $\{\K_{b_1,b_2|y}\}_{b_2}$ represents a valid measurement, together with Charlie's measurements and state $\ket{\xi}$ we have a physical experiment which satisfies conditions given in Lemma~\ref{lem2}.  Hence, we can repeat exactly the same procedure as in Appendix~\ref{suppmat1} to establish existence ofthe $\Phi'= \Phi_{B_2}\otimes\Phi_C$ such that $\Phi_{B_2}$ acts nontrivially only on Hilbert space $\mathcal{H}^{B_2}$ and the whole isometry transforms state $\ket{\xi}^{ABC}$ as follows
\begin{equation}\label{isoprime}
    \Phi'\left(\K_{b_1,b_2|y}\otimes\M_{c|z}\ket{\xi}^{ABC}\otimes\ket{00}^{B_2'C'}\right) =  \ket{\tilde{\xi}}^{ABC}\otimes\left(\M'_{b_2|y}\otimes\M'_{c|z}\ket{\phi_+}^{B_2'C'}\right),
\end{equation}
which is the analogue of~\eqref{stmeasAB} and where $\ket{\tilde{\xi}}$ is defined as
\begin{align}
    \ket{\tilde{\xi}}^{ABC} &= \ket{\xi_1}^{AB_1}\otimes\left(\sqrt{3}\M_{0|0}^C\ket{\psi}^{B_2C}\right)\\
    &\equiv \ket{\xi_1}^{AB_1}\otimes\ket{\xi_2}^{B_2C}
\end{align}
Let us now rewrite eq.~\eqref{stmeasAB}
\begin{align}
    \Phi\left(\M_{a|x}^A\otimes\M_{b_1|y}^B\otimes\M_{c|z}^C\ket{\psi}^{AB_1}\otimes\ket{\psi}^{B_2C}\otimes\ket{00}^{A'B_1'}\right) &= \M_{c|z}^C\ket{\xi}^{ABC}\otimes{\M'}_{a|x}^{A'}\otimes{\M'}_{b_1|y_1}^{B_1'}\ket{\phi_+}^{A'B_1'}.
\end{align}
Insights from the beginning of this Appendix allow us to write further
\begin{align}
 \Phi\left(\M_{a|x}^A\otimes\M_{b_1,b_2|y}^B\otimes\M_{c|z}^C\ket{\psi}^{AB_1}\otimes\ket{\psi}^{B_2C}\otimes\ket{00}^{A'B_1'}\right) &= \K_{b_1,b_2|y}^B\otimes\M_{c|z}^C\ket{\xi}^{ABC}\otimes{\M'}_{a|x}^{A'}\otimes{\M'}_{b_1|y}^{B_1'}\ket{\phi_+}^{A'B_1'}.   
\end{align}
By acting on the r.h.s. of this equation with $\Phi'$, use eq.~\eqref{isoprime}, taking into account that $\Phi_{B_1}$ acts nontrivially only on $\mathcal{H}^{B_1}$, while $\Phi_{B_2}$ acts nontrivially only on $\mathcal{H}^{B_2}$, and summing over $b_1$ and $b_2$ we obtain:
\begin{equation}\label{finalst}
    \Phi_A\otimes\Phi_{B_1}\circ\Phi_{B_2}\otimes\Phi_C\left(\M_{a|x}\ket{\psi}^{A_1B_1}\otimes\M_{c|z}\ket{\psi}^{B_2C}\otimes\ket{0000}^{A'B_1'B_2'C'}\right) =  \ket{\tilde{\xi}}^{AB_1B_2C}\otimes\M'_{a|x}\ket{\phi_+}^{A'B_1'}\otimes\M'_{c|z}\ket{\phi_+}^{B_2'C'},
\end{equation}
which completes the proof of eq.~\eqref{iso1} if we denote $\tilde{\Phi} = \Phi_A\otimes\Phi_{B_1}\circ\Phi_{B_2}\otimes\Phi_C$.

\bibliography{bibliography}

\begin{thebibliography}{35}%
\makeatletter
\providecommand \@ifxundefined [1]{%
 \@ifx{#1\undefined}
}%
\providecommand \@ifnum [1]{%
 \ifnum #1\expandafter \@firstoftwo
 \else \expandafter \@secondoftwo
 \fi
}%
\providecommand \@ifx [1]{%
 \ifx #1\expandafter \@firstoftwo
 \else \expandafter \@secondoftwo
 \fi
}%
\providecommand \natexlab [1]{#1}%
\providecommand \enquote  [1]{``#1''}%
\providecommand \bibnamefont  [1]{#1}%
\providecommand \bibfnamefont [1]{#1}%
\providecommand \citenamefont [1]{#1}%
\providecommand \href@noop [0]{\@secondoftwo}%
\providecommand \href [0]{\begingroup \@sanitize@url \@href}%
\providecommand \@href[1]{\@@startlink{#1}\@@href}%
\providecommand \@@href[1]{\endgroup#1\@@endlink}%
\providecommand \@sanitize@url [0]{\catcode `\\12\catcode `\$12\catcode
  `\&12\catcode `\#12\catcode `\^12\catcode `\_12\catcode `\%12\relax}%
\providecommand \@@startlink[1]{}%
\providecommand \@@endlink[0]{}%
\providecommand \url  [0]{\begingroup\@sanitize@url \@url }%
\providecommand \@url [1]{\endgroup\@href {#1}{\urlprefix }}%
\providecommand \urlprefix  [0]{URL }%
\providecommand \Eprint [0]{\href }%
\providecommand \doibase [0]{http://dx.doi.org/}%
\providecommand \selectlanguage [0]{\@gobble}%
\providecommand \bibinfo  [0]{\@secondoftwo}%
\providecommand \bibfield  [0]{\@secondoftwo}%
\providecommand \translation [1]{[#1]}%
\providecommand \BibitemOpen [0]{}%
\providecommand \bibitemStop [0]{}%
\providecommand \bibitemNoStop [0]{.\EOS\space}%
\providecommand \EOS [0]{\spacefactor3000\relax}%
\providecommand \BibitemShut  [1]{\csname bibitem#1\endcsname}%
\let\auto@bib@innerbib\@empty
\bibitem [{\citenamefont {Bennett}\ and\ \citenamefont
  {Brassard}(2014)}]{BB84}%
  \BibitemOpen
  \bibfield  {author} {\bibinfo {author} {\bibfnamefont {C.~H.}\ \bibnamefont
  {Bennett}}\ and\ \bibinfo {author} {\bibfnamefont {G.}~\bibnamefont
  {Brassard}},\ }\href {\doibase https://doi.org/10.1016/j.tcs.2014.05.025}
  {\bibfield  {journal} {\bibinfo  {journal} {Theoretical Computer Science}\
  }\textbf {\bibinfo {volume} {560}},\ \bibinfo {pages} {7} (\bibinfo {year}
  {2014})}\BibitemShut {NoStop}%
\bibitem [{\citenamefont {Ekert}(1991)}]{Ekert}%
  \BibitemOpen
  \bibfield  {author} {\bibinfo {author} {\bibfnamefont {A.~K.}\ \bibnamefont
  {Ekert}},\ }\href {\doibase 10.1103/PhysRevLett.67.661} {\bibfield  {journal}
  {\bibinfo  {journal} {Phys. Rev. Lett.}\ }\textbf {\bibinfo {volume} {67}},\
  \bibinfo {pages} {661} (\bibinfo {year} {1991})}\BibitemShut {NoStop}%
\bibitem [{\citenamefont {Terhal}\ \emph {et~al.}(2001)\citenamefont {Terhal},
  \citenamefont {DiVincenzo},\ and\ \citenamefont {Leung}}]{QDH}%
  \BibitemOpen
  \bibfield  {author} {\bibinfo {author} {\bibfnamefont {B.~M.}\ \bibnamefont
  {Terhal}}, \bibinfo {author} {\bibfnamefont {D.~P.}\ \bibnamefont
  {DiVincenzo}}, \ and\ \bibinfo {author} {\bibfnamefont {D.~W.}\ \bibnamefont
  {Leung}},\ }\href {\doibase 10.1103/physrevlett.86.5807} {\bibfield
  {journal} {\bibinfo  {journal} {Physical Review Letters}\ }\textbf {\bibinfo
  {volume} {86}},\ \bibinfo {pages} {5807–5810} (\bibinfo {year}
  {2001})}\BibitemShut {NoStop}%
\bibitem [{\citenamefont {Peres}\ and\ \citenamefont
  {Wootters}(1991)}]{PeresWootters}%
  \BibitemOpen
  \bibfield  {author} {\bibinfo {author} {\bibfnamefont {A.}~\bibnamefont
  {Peres}}\ and\ \bibinfo {author} {\bibfnamefont {W.~K.}\ \bibnamefont
  {Wootters}},\ }\href {\doibase 10.1103/PhysRevLett.66.1119} {\bibfield
  {journal} {\bibinfo  {journal} {Phys. Rev. Lett.}\ }\textbf {\bibinfo
  {volume} {66}},\ \bibinfo {pages} {1119} (\bibinfo {year}
  {1991})}\BibitemShut {NoStop}%
\bibitem [{\citenamefont {Bennett}\ \emph
  {et~al.}(1999{\natexlab{a}})\citenamefont {Bennett}, \citenamefont
  {DiVincenzo}, \citenamefont {Fuchs}, \citenamefont {Mor}, \citenamefont
  {Rains}, \citenamefont {Shor}, \citenamefont {Smolin},\ and\ \citenamefont
  {Wootters}}]{Bennett1999}%
  \BibitemOpen
  \bibfield  {author} {\bibinfo {author} {\bibfnamefont {C.~H.}\ \bibnamefont
  {Bennett}}, \bibinfo {author} {\bibfnamefont {D.~P.}\ \bibnamefont
  {DiVincenzo}}, \bibinfo {author} {\bibfnamefont {C.~A.}\ \bibnamefont
  {Fuchs}}, \bibinfo {author} {\bibfnamefont {T.}~\bibnamefont {Mor}}, \bibinfo
  {author} {\bibfnamefont {E.}~\bibnamefont {Rains}}, \bibinfo {author}
  {\bibfnamefont {P.~W.}\ \bibnamefont {Shor}}, \bibinfo {author}
  {\bibfnamefont {J.~A.}\ \bibnamefont {Smolin}}, \ and\ \bibinfo {author}
  {\bibfnamefont {W.~K.}\ \bibnamefont {Wootters}},\ }\href {\doibase
  10.1103/PhysRevA.59.1070} {\bibfield  {journal} {\bibinfo  {journal} {Phys.
  Rev. A}\ }\textbf {\bibinfo {volume} {59}},\ \bibinfo {pages} {1070}
  (\bibinfo {year} {1999}{\natexlab{a}})}\BibitemShut {NoStop}%
\bibitem [{\citenamefont {Bennett}\ \emph
  {et~al.}(1999{\natexlab{b}})\citenamefont {Bennett}, \citenamefont
  {DiVincenzo}, \citenamefont {Mor}, \citenamefont {Shor}, \citenamefont
  {Smolin},\ and\ \citenamefont {Terhal}}]{UPB}%
  \BibitemOpen
  \bibfield  {author} {\bibinfo {author} {\bibfnamefont {C.~H.}\ \bibnamefont
  {Bennett}}, \bibinfo {author} {\bibfnamefont {D.~P.}\ \bibnamefont
  {DiVincenzo}}, \bibinfo {author} {\bibfnamefont {T.}~\bibnamefont {Mor}},
  \bibinfo {author} {\bibfnamefont {P.~W.}\ \bibnamefont {Shor}}, \bibinfo
  {author} {\bibfnamefont {J.~A.}\ \bibnamefont {Smolin}}, \ and\ \bibinfo
  {author} {\bibfnamefont {B.~M.}\ \bibnamefont {Terhal}},\ }\href {\doibase
  10.1103/physrevlett.82.5385} {\bibfield  {journal} {\bibinfo  {journal}
  {Physical Review Letters}\ }\textbf {\bibinfo {volume} {82}},\ \bibinfo
  {pages} {5385–5388} (\bibinfo {year} {1999}{\natexlab{b}})}\BibitemShut
  {NoStop}%
\bibitem [{\citenamefont {Walgate}\ and\ \citenamefont
  {Hardy}(2002)}]{Walgate2002}%
  \BibitemOpen
  \bibfield  {author} {\bibinfo {author} {\bibfnamefont {J.}~\bibnamefont
  {Walgate}}\ and\ \bibinfo {author} {\bibfnamefont {L.}~\bibnamefont
  {Hardy}},\ }\href {\doibase 10.1103/PhysRevLett.89.147901} {\bibfield
  {journal} {\bibinfo  {journal} {Phys. Rev. Lett.}\ }\textbf {\bibinfo
  {volume} {89}},\ \bibinfo {pages} {147901} (\bibinfo {year}
  {2002})}\BibitemShut {NoStop}%
\bibitem [{\citenamefont {Niset}\ and\ \citenamefont {Cerf}(2006)}]{Niset}%
  \BibitemOpen
  \bibfield  {author} {\bibinfo {author} {\bibfnamefont {J.}~\bibnamefont
  {Niset}}\ and\ \bibinfo {author} {\bibfnamefont {N.~J.}\ \bibnamefont
  {Cerf}},\ }\href {\doibase 10.1103/PhysRevA.74.052103} {\bibfield  {journal}
  {\bibinfo  {journal} {Phys. Rev. A}\ }\textbf {\bibinfo {volume} {74}},\
  \bibinfo {pages} {052103} (\bibinfo {year} {2006})}\BibitemShut {NoStop}%
\bibitem [{\citenamefont {Cohen}(2008)}]{Cohen}%
  \BibitemOpen
  \bibfield  {author} {\bibinfo {author} {\bibfnamefont {S.~M.}\ \bibnamefont
  {Cohen}},\ }\href {\doibase 10.1103/PhysRevA.77.012304} {\bibfield  {journal}
  {\bibinfo  {journal} {Phys. Rev. A}\ }\textbf {\bibinfo {volume} {77}},\
  \bibinfo {pages} {012304} (\bibinfo {year} {2008})}\BibitemShut {NoStop}%
\bibitem [{\citenamefont {Feng}\ and\ \citenamefont {Shi}(2009)}]{FengShi}%
  \BibitemOpen
  \bibfield  {author} {\bibinfo {author} {\bibfnamefont {Y.}~\bibnamefont
  {Feng}}\ and\ \bibinfo {author} {\bibfnamefont {Y.}~\bibnamefont {Shi}},\
  }\href {\doibase 10.1109/TIT.2009.2018330} {\bibfield  {journal} {\bibinfo
  {journal} {IEEE Transactions on Information Theory}\ }\textbf {\bibinfo
  {volume} {55}},\ \bibinfo {pages} {2799} (\bibinfo {year}
  {2009})}\BibitemShut {NoStop}%
\bibitem [{\citenamefont {Childs}\ \emph {et~al.}(2013)\citenamefont {Childs},
  \citenamefont {Leung}, \citenamefont {Mančinska},\ and\ \citenamefont
  {Ozols}}]{Childs}%
  \BibitemOpen
  \bibfield  {author} {\bibinfo {author} {\bibfnamefont {A.~M.}\ \bibnamefont
  {Childs}}, \bibinfo {author} {\bibfnamefont {D.}~\bibnamefont {Leung}},
  \bibinfo {author} {\bibfnamefont {L.}~\bibnamefont {Mančinska}}, \ and\
  \bibinfo {author} {\bibfnamefont {M.}~\bibnamefont {Ozols}},\ }\href
  {\doibase 10.1007/s00220-013-1784-0} {\bibfield  {journal} {\bibinfo
  {journal} {Communications in Mathematical Physics}\ }\textbf {\bibinfo
  {volume} {323}},\ \bibinfo {pages} {1121–1153} (\bibinfo {year}
  {2013})}\BibitemShut {NoStop}%
\bibitem [{\citenamefont {Croke}\ and\ \citenamefont {Barnett}(2017)}]{Croke}%
  \BibitemOpen
  \bibfield  {author} {\bibinfo {author} {\bibfnamefont {S.}~\bibnamefont
  {Croke}}\ and\ \bibinfo {author} {\bibfnamefont {S.~M.}\ \bibnamefont
  {Barnett}},\ }\href {\doibase 10.1103/PhysRevA.95.012337} {\bibfield
  {journal} {\bibinfo  {journal} {Phys. Rev. A}\ }\textbf {\bibinfo {volume}
  {95}},\ \bibinfo {pages} {012337} (\bibinfo {year} {2017})}\BibitemShut
  {NoStop}%
\bibitem [{\citenamefont {DiVincenzo}\ \emph {et~al.}(2003)\citenamefont
  {DiVincenzo}, \citenamefont {Mor}, \citenamefont {Shor}, \citenamefont
  {Smolin},\ and\ \citenamefont {Terhal}}]{DiVincenzo}%
  \BibitemOpen
  \bibfield  {author} {\bibinfo {author} {\bibfnamefont {D.~P.}\ \bibnamefont
  {DiVincenzo}}, \bibinfo {author} {\bibfnamefont {T.}~\bibnamefont {Mor}},
  \bibinfo {author} {\bibfnamefont {P.~W.}\ \bibnamefont {Shor}}, \bibinfo
  {author} {\bibfnamefont {J.~A.}\ \bibnamefont {Smolin}}, \ and\ \bibinfo
  {author} {\bibfnamefont {B.~M.}\ \bibnamefont {Terhal}},\ }\href {\doibase
  10.1007/s00220-003-0877-6} {\bibfield  {journal} {\bibinfo  {journal}
  {Communications in Mathematical Physics}\ }\textbf {\bibinfo {volume}
  {238}},\ \bibinfo {pages} {379–410} (\bibinfo {year} {2003})}\BibitemShut
  {NoStop}%
\bibitem [{\citenamefont {Halder}\ \emph {et~al.}(2019)\citenamefont {Halder},
  \citenamefont {Banik}, \citenamefont {Agrawal},\ and\ \citenamefont
  {Bandyopadhyay}}]{Halder2019}%
  \BibitemOpen
  \bibfield  {author} {\bibinfo {author} {\bibfnamefont {S.}~\bibnamefont
  {Halder}}, \bibinfo {author} {\bibfnamefont {M.}~\bibnamefont {Banik}},
  \bibinfo {author} {\bibfnamefont {S.}~\bibnamefont {Agrawal}}, \ and\
  \bibinfo {author} {\bibfnamefont {S.}~\bibnamefont {Bandyopadhyay}},\ }\href
  {\doibase 10.1103/PhysRevLett.122.040403} {\bibfield  {journal} {\bibinfo
  {journal} {Phys. Rev. Lett.}\ }\textbf {\bibinfo {volume} {122}},\ \bibinfo
  {pages} {040403} (\bibinfo {year} {2019})}\BibitemShut {NoStop}%
\bibitem [{\citenamefont {Bhattacharya}\ \emph {et~al.}(2020)\citenamefont
  {Bhattacharya}, \citenamefont {Saha}, \citenamefont {Guha},\ and\
  \citenamefont {Banik}}]{Banik2}%
  \BibitemOpen
  \bibfield  {author} {\bibinfo {author} {\bibfnamefont {S.~S.}\ \bibnamefont
  {Bhattacharya}}, \bibinfo {author} {\bibfnamefont {S.}~\bibnamefont {Saha}},
  \bibinfo {author} {\bibfnamefont {T.}~\bibnamefont {Guha}}, \ and\ \bibinfo
  {author} {\bibfnamefont {M.}~\bibnamefont {Banik}},\ }\href {\doibase
  10.1103/PhysRevResearch.2.012068} {\bibfield  {journal} {\bibinfo  {journal}
  {Phys. Rev. Research}\ }\textbf {\bibinfo {volume} {2}},\ \bibinfo {pages}
  {012068} (\bibinfo {year} {2020})}\BibitemShut {NoStop}%
\bibitem [{\citenamefont {Ha}\ and\ \citenamefont {Kwon}(2021)}]{Ha}%
  \BibitemOpen
  \bibfield  {author} {\bibinfo {author} {\bibfnamefont {D.}~\bibnamefont
  {Ha}}\ and\ \bibinfo {author} {\bibfnamefont {Y.}~\bibnamefont {Kwon}},\
  }\href@noop {} {\bibfield  {journal} {\bibinfo  {journal} {npj Quantum
  Information}\ }\textbf {\bibinfo {volume} {7}},\ \bibinfo {pages} {81}
  (\bibinfo {year} {2021})}\BibitemShut {NoStop}%
\bibitem [{\citenamefont {\ifmmode~\dot{Z}\else \.{Z}\fi{}ukowski}\ \emph
  {et~al.}(1993)\citenamefont {\ifmmode~\dot{Z}\else \.{Z}\fi{}ukowski},
  \citenamefont {Zeilinger}, \citenamefont {Horne},\ and\ \citenamefont
  {Ekert}}]{Zukowski1993}%
  \BibitemOpen
  \bibfield  {author} {\bibinfo {author} {\bibfnamefont {M.}~\bibnamefont
  {\ifmmode~\dot{Z}\else \.{Z}\fi{}ukowski}}, \bibinfo {author} {\bibfnamefont
  {A.}~\bibnamefont {Zeilinger}}, \bibinfo {author} {\bibfnamefont {M.~A.}\
  \bibnamefont {Horne}}, \ and\ \bibinfo {author} {\bibfnamefont {A.~K.}\
  \bibnamefont {Ekert}},\ }\href {\doibase 10.1103/PhysRevLett.71.4287}
  {\bibfield  {journal} {\bibinfo  {journal} {Phys. Rev. Lett.}\ }\textbf
  {\bibinfo {volume} {71}},\ \bibinfo {pages} {4287} (\bibinfo {year}
  {1993})}\BibitemShut {NoStop}%
\bibitem [{\citenamefont {Branciard}\ \emph {et~al.}(2010)\citenamefont
  {Branciard}, \citenamefont {Gisin},\ and\ \citenamefont
  {Pironio}}]{Branciard_2010}%
  \BibitemOpen
  \bibfield  {author} {\bibinfo {author} {\bibfnamefont {C.}~\bibnamefont
  {Branciard}}, \bibinfo {author} {\bibfnamefont {N.}~\bibnamefont {Gisin}}, \
  and\ \bibinfo {author} {\bibfnamefont {S.}~\bibnamefont {Pironio}},\ }\href
  {\doibase 10.1103/PhysRevLett.104.170401} {\bibfield  {journal} {\bibinfo
  {journal} {Phys. Rev. Lett.}\ }\textbf {\bibinfo {volume} {104}},\ \bibinfo
  {pages} {170401} (\bibinfo {year} {2010})}\BibitemShut {NoStop}%
\bibitem [{\citenamefont {Fritz}(2012)}]{Fritz_2012}%
  \BibitemOpen
  \bibfield  {author} {\bibinfo {author} {\bibfnamefont {T.}~\bibnamefont
  {Fritz}},\ }\href {\doibase 10.1088/1367-2630/14/10/103001} {\bibfield
  {journal} {\bibinfo  {journal} {New Journal of Physics}\ }\textbf {\bibinfo
  {volume} {14}},\ \bibinfo {pages} {103001} (\bibinfo {year}
  {2012})}\BibitemShut {NoStop}%
\bibitem [{\citenamefont {Mayers}\ and\ \citenamefont {Yao}(2004)}]{MayersYao}%
  \BibitemOpen
  \bibfield  {author} {\bibinfo {author} {\bibfnamefont {D.}~\bibnamefont
  {Mayers}}\ and\ \bibinfo {author} {\bibfnamefont {A.}~\bibnamefont {Yao}},\
  }\href@noop {} {\bibfield  {journal} {\bibinfo  {journal} {Quantum Info.
  Comput.}\ }\textbf {\bibinfo {volume} {4}},\ \bibinfo {pages} {273} (\bibinfo
  {year} {2004})}\BibitemShut {NoStop}%
\bibitem [{\citenamefont {Šupić}\ and\ \citenamefont
  {Bowles}(2020)}]{SupicBowles}%
  \BibitemOpen
  \bibfield  {author} {\bibinfo {author} {\bibfnamefont {I.}~\bibnamefont
  {Šupić}}\ and\ \bibinfo {author} {\bibfnamefont {J.}~\bibnamefont
  {Bowles}},\ }\href {\doibase 10.22331/q-2020-09-30-337} {\bibfield  {journal}
  {\bibinfo  {journal} {Quantum}\ }\textbf {\bibinfo {volume} {4}},\ \bibinfo
  {pages} {337} (\bibinfo {year} {2020})}\BibitemShut {NoStop}%
\bibitem [{\citenamefont {{\v{S}}upi\'{c}}\ \emph {et~al.}(2022)\citenamefont
  {{\v{S}}upi\'{c}}, \citenamefont {Bancal}, \citenamefont {Cai},\ and\
  \citenamefont {Brunner}}]{Supic2022}%
  \BibitemOpen
  \bibfield  {author} {\bibinfo {author} {\bibfnamefont {I.}~\bibnamefont
  {{\v{S}}upi\'{c}}}, \bibinfo {author} {\bibfnamefont {J.-D.}\ \bibnamefont
  {Bancal}}, \bibinfo {author} {\bibfnamefont {Y.}~\bibnamefont {Cai}}, \ and\
  \bibinfo {author} {\bibfnamefont {N.}~\bibnamefont {Brunner}},\ }\href
  {\doibase 10.1103/PhysRevA.105.022206} {\bibfield  {journal} {\bibinfo
  {journal} {Phys. Rev. A}\ }\textbf {\bibinfo {volume} {105}},\ \bibinfo
  {pages} {022206} (\bibinfo {year} {2022})}\BibitemShut {NoStop}%
\bibitem [{\citenamefont {Bell}(1964)}]{Bell}%
  \BibitemOpen
  \bibfield  {author} {\bibinfo {author} {\bibfnamefont {J.~S.}\ \bibnamefont
  {Bell}},\ }\href {\doibase 10.1103/PhysicsPhysiqueFizika.1.195} {\bibfield
  {journal} {\bibinfo  {journal} {Physics Physique Fizika}\ }\textbf {\bibinfo
  {volume} {1}},\ \bibinfo {pages} {195} (\bibinfo {year} {1964})}\BibitemShut
  {NoStop}%
\bibitem [{\citenamefont {Brunner}\ \emph {et~al.}(2014)\citenamefont
  {Brunner}, \citenamefont {Cavalcanti}, \citenamefont {Pironio}, \citenamefont
  {Scarani},\ and\ \citenamefont {Wehner}}]{review}%
  \BibitemOpen
  \bibfield  {author} {\bibinfo {author} {\bibfnamefont {N.}~\bibnamefont
  {Brunner}}, \bibinfo {author} {\bibfnamefont {D.}~\bibnamefont {Cavalcanti}},
  \bibinfo {author} {\bibfnamefont {S.}~\bibnamefont {Pironio}}, \bibinfo
  {author} {\bibfnamefont {V.}~\bibnamefont {Scarani}}, \ and\ \bibinfo
  {author} {\bibfnamefont {S.}~\bibnamefont {Wehner}},\ }\href {\doibase
  10.1103/RevModPhys.86.419} {\bibfield  {journal} {\bibinfo  {journal} {Rev.
  Mod. Phys.}\ }\textbf {\bibinfo {volume} {86}},\ \bibinfo {pages} {419}
  (\bibinfo {year} {2014})}\BibitemShut {NoStop}%
\bibitem [{\citenamefont {Branciard}\ \emph {et~al.}(2012)\citenamefont
  {Branciard}, \citenamefont {Rosset}, \citenamefont {Gisin},\ and\
  \citenamefont {Pironio}}]{branciard2012bilocal}%
  \BibitemOpen
  \bibfield  {author} {\bibinfo {author} {\bibfnamefont {C.}~\bibnamefont
  {Branciard}}, \bibinfo {author} {\bibfnamefont {D.}~\bibnamefont {Rosset}},
  \bibinfo {author} {\bibfnamefont {N.}~\bibnamefont {Gisin}}, \ and\ \bibinfo
  {author} {\bibfnamefont {S.}~\bibnamefont {Pironio}},\ }\href {\doibase
  10.1103/physreva.85.032119} {\bibfield  {journal} {\bibinfo  {journal}
  {Physical Review A}\ }\textbf {\bibinfo {volume} {85}} (\bibinfo {year}
  {2012}),\ 10.1103/physreva.85.032119}\BibitemShut {NoStop}%
\bibitem [{\citenamefont {Gisin}\ \emph {et~al.}(2020)\citenamefont {Gisin},
  \citenamefont {Bancal}, \citenamefont {Cai}, \citenamefont {Remy},
  \citenamefont {Tavakoli}, \citenamefont {Cruzeiro}, \citenamefont {Popescu},\
  and\ \citenamefont {Brunner}}]{gisin2020constraints}%
  \BibitemOpen
  \bibfield  {author} {\bibinfo {author} {\bibfnamefont {N.}~\bibnamefont
  {Gisin}}, \bibinfo {author} {\bibfnamefont {J.-D.}\ \bibnamefont {Bancal}},
  \bibinfo {author} {\bibfnamefont {Y.}~\bibnamefont {Cai}}, \bibinfo {author}
  {\bibfnamefont {P.}~\bibnamefont {Remy}}, \bibinfo {author} {\bibfnamefont
  {A.}~\bibnamefont {Tavakoli}}, \bibinfo {author} {\bibfnamefont {E.~Z.}\
  \bibnamefont {Cruzeiro}}, \bibinfo {author} {\bibfnamefont {S.}~\bibnamefont
  {Popescu}}, \ and\ \bibinfo {author} {\bibfnamefont {N.}~\bibnamefont
  {Brunner}},\ }\href {\doibase 10.1038/s41467-020-16137-4} {\bibfield
  {journal} {\bibinfo  {journal} {Nature Communications}\ }\textbf {\bibinfo
  {volume} {11}},\ \bibinfo {pages} {1} (\bibinfo {year} {2020})}\BibitemShut
  {NoStop}%
\bibitem [{\citenamefont {Renou}\ \emph {et~al.}(2019)\citenamefont {Renou},
  \citenamefont {Bäumer}, \citenamefont {Boreiri}, \citenamefont {Brunner},
  \citenamefont {Gisin},\ and\ \citenamefont {Beigi}}]{Renou_2019}%
  \BibitemOpen
  \bibfield  {author} {\bibinfo {author} {\bibfnamefont {M.-O.}\ \bibnamefont
  {Renou}}, \bibinfo {author} {\bibfnamefont {E.}~\bibnamefont {Bäumer}},
  \bibinfo {author} {\bibfnamefont {S.}~\bibnamefont {Boreiri}}, \bibinfo
  {author} {\bibfnamefont {N.}~\bibnamefont {Brunner}}, \bibinfo {author}
  {\bibfnamefont {N.}~\bibnamefont {Gisin}}, \ and\ \bibinfo {author}
  {\bibfnamefont {S.}~\bibnamefont {Beigi}},\ }\href {\doibase
  10.1103/physrevlett.123.140401} {\bibfield  {journal} {\bibinfo  {journal}
  {Physical Review Letters}\ }\textbf {\bibinfo {volume} {123}} (\bibinfo
  {year} {2019}),\ 10.1103/physrevlett.123.140401}\BibitemShut {NoStop}%
\bibitem [{\citenamefont {Pozas-Kerstjens}\ \emph {et~al.}(2019)\citenamefont
  {Pozas-Kerstjens}, \citenamefont {Rabelo}, \citenamefont {Rudnicki},
  \citenamefont {Chaves}, \citenamefont {Cavalcanti}, \citenamefont
  {Navascues},\ and\ \citenamefont {Ac{\'\i}n}}]{Pozas_Kerstjens_2019}%
  \BibitemOpen
  \bibfield  {author} {\bibinfo {author} {\bibfnamefont {A.}~\bibnamefont
  {Pozas-Kerstjens}}, \bibinfo {author} {\bibfnamefont {R.}~\bibnamefont
  {Rabelo}}, \bibinfo {author} {\bibfnamefont {L.}~\bibnamefont {Rudnicki}},
  \bibinfo {author} {\bibfnamefont {R.}~\bibnamefont {Chaves}}, \bibinfo
  {author} {\bibfnamefont {D.}~\bibnamefont {Cavalcanti}}, \bibinfo {author}
  {\bibfnamefont {M.}~\bibnamefont {Navascues}}, \ and\ \bibinfo {author}
  {\bibfnamefont {A.}~\bibnamefont {Ac{\'\i}n}},\ }\href {\doibase
  10.1103/physrevlett.123.140503} {\bibfield  {journal} {\bibinfo  {journal}
  {Physical Review Letters}\ }\textbf {\bibinfo {volume} {123}} (\bibinfo
  {year} {2019}),\ 10.1103/physrevlett.123.140503}\BibitemShut {NoStop}%
\bibitem [{\citenamefont {Tavakoli}\ \emph {et~al.}(2021)\citenamefont
  {Tavakoli}, \citenamefont {Pozas-Kerstjens}, \citenamefont {Luo},\ and\
  \citenamefont {Renou}}]{ArminReview}%
  \BibitemOpen
  \bibfield  {author} {\bibinfo {author} {\bibfnamefont {A.}~\bibnamefont
  {Tavakoli}}, \bibinfo {author} {\bibfnamefont {A.}~\bibnamefont
  {Pozas-Kerstjens}}, \bibinfo {author} {\bibfnamefont {M.-X.}\ \bibnamefont
  {Luo}}, \ and\ \bibinfo {author} {\bibfnamefont {M.-O.}\ \bibnamefont
  {Renou}},\ }\href@noop {} {\enquote {\bibinfo {title} {Bell nonlocality in
  networks},}\ } (\bibinfo {year} {2021}),\ \Eprint
  {http://arxiv.org/abs/2104.10700} {arXiv:2104.10700 [quant-ph]} \BibitemShut
  {NoStop}%
\bibitem [{\citenamefont {Yang}\ and\ \citenamefont
  {Navascu\'es}(2013)}]{Yang}%
  \BibitemOpen
  \bibfield  {author} {\bibinfo {author} {\bibfnamefont {T.~H.}\ \bibnamefont
  {Yang}}\ and\ \bibinfo {author} {\bibfnamefont {M.}~\bibnamefont
  {Navascu\'es}},\ }\href {\doibase 10.1103/PhysRevA.87.050102} {\bibfield
  {journal} {\bibinfo  {journal} {Phys. Rev. A}\ }\textbf {\bibinfo {volume}
  {87}},\ \bibinfo {pages} {050102} (\bibinfo {year} {2013})}\BibitemShut
  {NoStop}%
\bibitem [{\citenamefont {Renou}\ \emph {et~al.}(2018)\citenamefont {Renou},
  \citenamefont {Kaniewski},\ and\ \citenamefont {Brunner}}]{renou2018self}%
  \BibitemOpen
  \bibfield  {author} {\bibinfo {author} {\bibfnamefont {M.~O.}\ \bibnamefont
  {Renou}}, \bibinfo {author} {\bibfnamefont {J.}~\bibnamefont {Kaniewski}}, \
  and\ \bibinfo {author} {\bibfnamefont {N.}~\bibnamefont {Brunner}},\ }\href
  {\doibase 10.1103/PhysRevLett.121.250507} {\bibfield  {journal} {\bibinfo
  {journal} {Phys. Rev. Lett.}\ }\textbf {\bibinfo {volume} {121}},\ \bibinfo
  {pages} {250507} (\bibinfo {year} {2018})}\BibitemShut {NoStop}%
\bibitem [{\citenamefont {Bancal}\ \emph {et~al.}(2018)\citenamefont {Bancal},
  \citenamefont {Sangouard},\ and\ \citenamefont {Sekatski}}]{Bancal_BSM}%
  \BibitemOpen
  \bibfield  {author} {\bibinfo {author} {\bibfnamefont {J.-D.}\ \bibnamefont
  {Bancal}}, \bibinfo {author} {\bibfnamefont {N.}~\bibnamefont {Sangouard}}, \
  and\ \bibinfo {author} {\bibfnamefont {P.}~\bibnamefont {Sekatski}},\ }\href
  {\doibase 10.1103/PhysRevLett.121.250506} {\bibfield  {journal} {\bibinfo
  {journal} {Phys. Rev. Lett.}\ }\textbf {\bibinfo {volume} {121}},\ \bibinfo
  {pages} {250506} (\bibinfo {year} {2018})}\BibitemShut {NoStop}%
\bibitem [{\citenamefont {Šupić}\ \emph {et~al.}(2021)\citenamefont
  {Šupić}, \citenamefont {Cavalcanti},\ and\ \citenamefont
  {Bowles}}]{Supic2021}%
  \BibitemOpen
  \bibfield  {author} {\bibinfo {author} {\bibfnamefont {I.}~\bibnamefont
  {Šupić}}, \bibinfo {author} {\bibfnamefont {D.}~\bibnamefont {Cavalcanti}},
  \ and\ \bibinfo {author} {\bibfnamefont {J.}~\bibnamefont {Bowles}},\ }\href
  {\doibase 10.22331/q-2021-03-23-418} {\bibfield  {journal} {\bibinfo
  {journal} {Quantum}\ }\textbf {\bibinfo {volume} {5}},\ \bibinfo {pages}
  {418} (\bibinfo {year} {2021})}\BibitemShut {NoStop}%
\bibitem [{\citenamefont {Vértesi}\ and\ \citenamefont
  {Navascués}(2011)}]{Vertesi2011}%
  \BibitemOpen
  \bibfield  {author} {\bibinfo {author} {\bibfnamefont {T.}~\bibnamefont
  {Vértesi}}\ and\ \bibinfo {author} {\bibfnamefont {M.}~\bibnamefont
  {Navascués}},\ }\href {\doibase 10.1103/physreva.83.062112} {\bibfield
  {journal} {\bibinfo  {journal} {Physical Review A}\ }\textbf {\bibinfo
  {volume} {83}} (\bibinfo {year} {2011}),\
  10.1103/physreva.83.062112}\BibitemShut {NoStop}%
\bibitem [{\citenamefont {Bennet}\ \emph {et~al.}(2014)\citenamefont {Bennet},
  \citenamefont {V\'ertesi}, \citenamefont {Saunders}, \citenamefont
  {Brunner},\ and\ \citenamefont {Pryde}}]{Bennet2014}%
  \BibitemOpen
  \bibfield  {author} {\bibinfo {author} {\bibfnamefont {A.}~\bibnamefont
  {Bennet}}, \bibinfo {author} {\bibfnamefont {T.}~\bibnamefont {V\'ertesi}},
  \bibinfo {author} {\bibfnamefont {D.~J.}\ \bibnamefont {Saunders}}, \bibinfo
  {author} {\bibfnamefont {N.}~\bibnamefont {Brunner}}, \ and\ \bibinfo
  {author} {\bibfnamefont {G.~J.}\ \bibnamefont {Pryde}},\ }\href {\doibase
  10.1103/PhysRevLett.113.080405} {\bibfield  {journal} {\bibinfo  {journal}
  {Phys. Rev. Lett.}\ }\textbf {\bibinfo {volume} {113}},\ \bibinfo {pages}
  {080405} (\bibinfo {year} {2014})}\BibitemShut {NoStop}%
\end{thebibliography}%

\end{document}